\newtheorem{define}{Definition}
\newcommand\motnouv[1]{\emph{#1}}
\newcommand\N{\mathbb{N}}
\newcommand\Z{\mathbb{Z}}
\newcommand\Part{\mathscr{P}}
\def\E{\mathop{\mathbb{E}}\nolimits}
\def\ETf{\mathop{\mathbb{E}(T_f)}\nolimits}
\newcommand\I{{I}}
\begin{document}

\mainmatter

\title{Global Versus Local Computations:
Fast Computing with Identifiers}

\author{Mika\"el RABIE}


\institute{LIP, ENS de Lyon\\  69007 Lyon, France\\
  \texttt{mikael.rabie@ens-lyon.org}   }



\maketitle


\begin{abstract}
This paper studies what can be computed by using probabilistic
local interactions with agents with a very restricted power in polylogarithmic
parallel time.

It is known that if agents are only finite state (corresponding to the Population
Protocol model by Angluin {\em et al.}), then only semilinear
predicates over the global input can be computed. In fact,  if 
the population starts with a unique leader, these predicates
can even be computed in a polylogarithmic parallel time.

If identifiers are added (corresponding to the Community Protocol model by
Guerraoui and Ruppert), then more global predicates over the input
multiset can be computed.
Local predicates over the input sorted according to the identifiers
can also be computed, as long as the identifiers are ordered.
The time of some of those predicates might require exponential parallel time.

In this paper, we consider what can be computed with Community Protocol
in a polylogarithmic number of parallel interactions. We introduce the  class
 $CPPL$ corresponding to protocols that use $O(n\log^kn)$, for some $k$,
expected interactions to compute their predicates, 
or equivalently a polylogarithmic number of parallel expected interactions.

We provide some computable protocols, some boundaries of the class,
using the fact that the population can compute its size.
We also prove two impossibility results providing some arguments showing that local computations are no
longer easy: the population does not have the time to compare a linear number
of consecutive identifiers. The {\em Linearly Local} languages,
such that the rational language $(ab)^*$,
are not computable.




\end{abstract}

\section{Introduction}


Population Protocols, introduced by Angluin \emph{et al.}
in 2004 \cite{AspnesADFP2004}, corresponds to a model of finite states devices
with a very restricted memory using pairwise interactions to communicate and
compute a global result.
Predicates computable by  population protocols
have been characterized as being precisely the semi-linear predicates;
i.e. those equivalent to be definable in
first-order Presburger arithmetic \cite{angluin2007cpp,AspnesADFP2004}.  Semi-linearity
was shown to be sufficient, and necessary. Those predicates use the global multiset of the input.

Later works on population protocols have concentrated on characterizing
what predicates on the input configurations can be stably computed in
different variants of the models and under various assumptions.
Variants of the original model considered so far include restriction
to one-way communications~\cite{angluin2007cpp}, restriction to
particular interaction graphs~\cite{AngluinACFJP2005}.  Various kinds of fault
tolerance have been studied for population
protocols~\cite{Delporte-GalletFGR06}, including the search for
self-stabilizing solutions~\cite{AngluinAFJ2005}.  
Some works also include the 
Probabilistic Population Protocol model that makes a random scheduling
assumption for interactions~\cite{DBLP:journals/dc/AngluinAE08a,DBLP:conf/wdag/DotyS15}.

Some works extend this model. The edges of the 
interaction graph may have  states that belong to a constant-size 
set. This model called the \emph{mediated population protocol} is 
presented in \cite{MichailCS11}. 
The addition on Non-Determinism has been studied in \cite{DBLP:conf/opodis/BeauquierBRR12}.
The research of Self-Stabilization (over some fairness assumption)
has been explored in \cite{AngluinAFJ2005,DBLP:conf/sss/BeauquierBK09}.
An extension with sensors offering a {\em cover-time} notion was also
studied in~\cite{DBLP:conf/sss/BeauquierBBD11}.
A recent study in \cite{DBLP:conf/mfcs/MertziosNRS16} also focused on finding the median agent in an extension 
of the model called Arithmetic Population Protocols.

More generally, the population protocol model shares many features with other models 
already considered in the literature. In particular, models of 
pairwise interactions have been used to study the propagation of 
diseases~\cite{Heth00}, or rumors \cite{dk65}. In chemistry, the 
chemical master equation has been justified using (stochastic) 
pairwise interactions between the finitely many molecules
\cite{gillespie1992rdc,Murray-VolI}. 
The variations over the LOCAL model \cite{DBLP:conf/spaa/FraigniaudKL07}
 can be seen as a restriction over
the interactions (using a graph) but with a set of possible improvements in
agents' capacities.

Agents have been endowed with even stronger tools in different models.
The \emph{passively mobile
 protocols} introduced by Chatzigiannakis \emph{et
  al.}  \cite{PMC} constitutes a
generalization of the population protocol model where finite state
agents are replaced by agents that correspond to arbitrary Turing
machines with $O(S (n))$ space per-agent, where $n$ is the number
of agents. As agents remain initially anonymous, only functions over
the global input can be computed.

The \emph{community protocols} introduced by Guerraoui and
Ruppert~\cite{guerraoui2009names} are closer to the original
population protocol model, assuming \textit{a priori} agents with
individual very restricted computational capabilities.  In this model,
agents are no longer anonymous: each agent has a unique
identifier and can only remember $O(1)$ other agent
identifiers.  Guerraoui
and Ruppert~\cite{guerraoui2009names} using results about the
so-called storage modification machines~\cite{schonhage1980storage},   proved
that such protocols 
simulate  Turing machines:  Predicates computed by this
model with $n$ agents are precisely the predicates in $NSPACE(n \log n)$. 
The sorted input symbols according to the identifiers can be analysed locally
by the protocols to compute the right output.
In~\cite{Rabie15},
the possibility that identifiers are no longer unique is explored through the
{\em homonym population protocols} model.

\subsection*{Motivation}

Angluin {\em et al.}, in \cite{DBLP:journals/dc/AngluinAE08a},
  prove that any computable predicate by a Population Protocol can be computed
in $O(n\log^5n)$ expected interactions, as long as there is a unique leader at
the beginning. This article includes some arguments leading
to the idea that there might exist protocols computing a leader election
in $O(n\log n)$ expected interactions. Doty {\em et al.} proved in \cite{DBLP:conf/wdag/DotyS15}
that there cannot be a protocol computing a leader so fast. They proved that a protocol needs $\Omega(n^2)$
expected interactions to get to a configuration with a single leader, if every agent is a potential
candidate at the beginning.

The exact characterization of what can be computed by populations
having unique leaders  gave the motivation to look to what can be computed
in $O(n\log^kn)$ expected interactions (for any $k>0$), with the Community
Protocols model \cite{guerraoui2009names}.
We consider, as in \cite{DBLP:journals/dc/AngluinAE08a}, that each pair of agents (or identifiers) have the
same probability to be chosen at each step of a computation. In \cite{DBLP:journals/dc/AngluinAE08a},
it is considered that dividing the number of expected interaction by $n$ provides
the expected number of parallel interactions.

Community protocols can be seen as interactions controlled by devices
in a social group. For example, identifiers can correspond to phone numbers,
and the devices can be applications on smartphones.
In this vision, it seems intuitive to consider that a group of individuals do not want
to stay too long together to compute some global information.
Sorting a group of people depending on phone numbers
to look for patterns does not seem intuitive, and hence useful.\\




This paper introduces the class \motnouv{CPPL}, corresponding
to what can be computed with Community Protocols in a
polylogarithmic number of expected parallel interactions
(which corresponds to a number of expected interactions 
bounded above by $n\log^k n$ for some $k$), or a polylogarithmic
number of epidemics or broadcasts. We introduce some protocols, proving that
the size of the population (or some subset) can be computed
in some sense to be explained.

We then show the weakness of this model based on the fact that
local computation cannot be performed over the whole input.
More precisely, we prove that only a polylogarithmic number of agents can
find the next or previous identifier to their own. 
We also introduce the class of
linearly local languages, containing the rational language $(ab)^*$, 
and prove that none of its elements cannot be computed.

We finish with some comparisons with other computational classes.
We introduce a class of Turing Machine trying to match the expressive power of $CPPL$.
Those machines use a polylogarithmic space of computation,
and is able to use the tools we found. This machine has access to global
informations of the input, but can focus locally only on a polylogarithmic number
of regions of the input.

The paper is organized as follows:
Section \ref{sec:1} provides the Community Protocol model introduced in \cite{guerraoui2009names}
and includes some examples.
Section \ref{sec:2} provides some elements and results about fast computing
with Population Protocols from \cite{DBLP:journals/dc/AngluinAE08a}.
Section \ref{sec:3} explains a way to keep the fairness of our protocols and describes
a way to compute the size of the population.
Section \ref{sec:4} introduces the notion of Linearly Local languages
and proves that these languages are not in $CPPL$.
Section \ref{sec:5} provides some complexity bounds on the class
$CPPL$.

\section{Model}\label{sec:1}

We present now the model introduced by
Guerraoui and Ruppert in \cite{guerraoui2009names}: Agents have
unique identifiers, and can store a fixed number of them. Agents
can compare 2 identifiers.
We consider that, unlike in \cite{Rabie15},
agents cannot know when two identifiers are consecutive.

This model has been proved in \cite{guerraoui2009names} to correspond to $NSPACE(n\log n)$,
even when we add a fixed number of byzantine agents.
We will not consider byzantine agents in this paper.


\begin{define} A \motnouv{Community Protocol} is given by seven elements $(U,B,d,\Sigma,\iota,\omega,
\delta)$ where:
\begin{itemize}
\item $U$ is the infinite ordered set of identifiers.
\item $B$ is a finite set of basic states.
\item $d \in\N$  is the number of identifiers that can be
  remembered by an agent.
\item $\Sigma$ is the finite set of entry symbols.
\item $\iota$ is an input function $\Sigma\rightarrow B$.
\item $\omega$ is an output function $B\rightarrow \{True,False\}$.
\item $\delta$ is a transition function $Q^2\rightarrow Q^2$, with
  $Q=B\times U \times (U  \cup \{\_\})^d$.
\end{itemize}
The set $Q=B\times U \times (U  \cup \{\_\})^d$
of possible states each agents can have is such
that each agent carries three elements: its identifier, its state, and $d$ slots
for identifiers.

The transition function $\delta$ has two restrictions: Agents cannot
store identifiers that they never heard about, and the transitions must
only depend on relative position of the identifiers in the slots and on
the state in $B$. More formally, we have:
\begin{enumerate}
\item if $\delta(q_1,q_2)=(q'_1,q'_2)$, and $id$ appears in
$q'_1$ or $q'_2$ then $id$ must appear in $q_1$ or in $q_2$.
\item whenever $\delta(q_1,q_2)=(q'_1,q'_2)$, 
let  $u_1< u_2< \dots<u_k$ be the distinct  identifiers that appear in any of
the four states $q_1,q_2,q'_1,q'_2$. Let $v_1< v_2< \dots< v_k$
be distinct identifiers. If $\rho(q)$ is the state obtained from $q$ by replacing all
occurrences of each  identifier $u_i$ by $v_i$, then we require that
$\delta(\rho(q_1),\rho(q_2))=(\rho(q'_1),\rho(q'_2))$. 
\end{enumerate}
\end{define}
We also add the fact that $\delta$ cannot change the identifier of an agent.

As a convention,
we will often call an agent of initial identifier $id\in U$ the agent $id$.
We will sometimes write $Id_k$ for the $k$th identifier present in the population.
An agent with identifier $id$, in state $q$ and with a list of $d$ identifiers
$L=id_1$, \ldots, $id_d$ will be written in what follows 
 $q_{id,id_1,\ldots,id_d}$.

\begin{example}[Leader Election]\label{lecp}
It is possible to compute a Leader Election (a protocol where all agents start in state $L$
from which we want to reach a configuration with a single $L$: the leader), where the leader will be the agent
with the smallest identifier, with $O(n\log n)$ expected interactions. As a reminder,
without identifiers, a protocol needs $\Omega(n^2)$ expected interactions to elect a leader \cite{DBLP:conf/wdag/DotyS15}.

Agents will store the identifier of their leader.
Here is the protocol, using above notations for rules:
\begin{itemize}
\item $B=\{L,N\}$.
\item $d=1$.
\item $\Sigma=L$, $\iota(L)=L$ and $\omega(L)=\omega(N)=True$.
\item $\delta$ is such that the non-trivial rules (i.e. where at least one state changes) are:
\end{itemize}
\begin{center}
\begin{tabular}{ r @{\hspace{0,2cm}} l @{$\rightarrow$} r @{\hspace{0,2cm}}  l l}
$L_{id_a,\_}$ & $L_{id_b,\_}$ & $L_{id_a,\_}$ & $N_{id_b,id_a}$ & with $id_a<id_b$\\
$L_{id_a,\_}$ & $N_{id_b,id_c}$ & $L_{id_a,\_}$ & $N_{id_b,id_a}$ & with $id_a<id_c$\\
$L_{id_a,\_}$ & $N_{id_b,id_c}$ & $N_{id_a,id_c}$ & $N_{id_b,id_c}$ & with $id_c<id_a$\\
$N_{id_a,id_b}$ & $N_{id_c,id_d}$ & $N_{id_a,id_b}$ & $N_{id_c,id_b}$ & with $id_b<id_d$\\
\end{tabular}
\end{center}

To determine the speed of this protocol, it suffices to realize that the final leader actually
does an epidemic to spread its identifier (epidemic is defined in Definition \ref{epidemicDef}).
An epidemic takes $O(n\log n)$ expected
interactions. Thus, the leader election can be performed in $O(n\log n)$ expected
interactions.
The notions of time and computation are defined in what follows.
\end{example}

\begin{remark}
To ensure that at some point, a single leader remains in the population, Gerraoui
{\em et al.} uses the notion of Fairness introduced in the Population Protocols model \cite{AspnesADFP2004}.
As we work here with probabilistic interaction (each pair of agents has the same
probability to interact), the fairness notion will not be needed.
\end{remark}

\begin{define}
An \motnouv{Input} is a subset of $U\times\Sigma$ such that any element of $U$
(the elements of $U$ being called \motnouv{Identifiers})  can
appear at most once. Inputs will often be seen as words of $\Sigma^*$,
as it is possible to sort the input elements according to the identifiers
(recall that we consider that $U$ is ordered).
An input $u=u_1\ldots u_n$ is such that the agent with the smallest identifier
has input $u_1$, the second has input $u_2$\ldots

The \motnouv{Initial State} of an agent assigned with the identifier
$id$ and the input $s$ is $(\iota(s),id,\_^d)$, where $\_^d$ states for $d$
repetitions of the empty slot $\_$.

A \motnouv{Configuration} is a subset of $Q$ where two elements cannot have the same first
identifier (i.e. two agents must have two distinct identifiers). 

A \motnouv{Step} is the transition between two configurations
$C\rightarrow C'$, where only two agents' states may change: we
apply to the two agents $a_1$ and $a_2$ the rule
corresponding to their respective state $q_1$ and $q_2$, i.e. if
$\delta(q_1,q_2)=(q'_1,q'_2)$ (also written by rule  $q_1$ $
q_2\rightarrow q'_1$ $q'_2$), then in $C'$ the respective states of $a_1$
and $a_2$ are $q'_1$ and $q'_2$. All other agents have the same state
in $C$ and $C'$.

A configuration has an \motnouv{Output} $y\in \{True,False\}$
if for each state $b\in B$ present in the population, $\omega(b)=y$.
A configuration $C$ is said \motnouv{Output Stable} if it has an output $y$ and
 if, for any $C'$ reachable from $C$, $C'$ has also the output $y$.

An input $w\in\Sigma^*$ has an \motnouv{Output} $y\in Y$ if from any
reachable configuration from the initial configuration, 
we can reach an output stable configuration of output $y$.
It means that from the input, the protocol will reach with probability 1
an output stable configuration, and there is a single output $y$ reachable.
The input is \motnouv{Accepted} if and only if it has output $True$.

A protocol  \motnouv{Computes} a set $L$ if,
  for any input word $w\in\Sigma^*$, the protocol provides an output, and
the protocol accepts $w$ if and only if $w\in L$.
We then say that $L$ is \motnouv{Computable}.
We will sometimes say that the protocol is \motnouv{Las Vegas}, as it will always succeed to
provide an output with probability 1.

A language is \motnouv{Computed in $f(n)$ Expected Interactions} if, for any input $w$,
the expected number of interactions to reach an output stable configuration
is bounded above by $f(|w|)$.
\end{define}

%
%
%
%

The Community Protocols model has been fully characterized:

\begin{theorem}[\cite{guerraoui2009names}]\label{cpmain}
The decisions problems computable by community protocols correspond
exactly to the class $NSPACE(n\log n)$.

The set of languages computable  by community protocols is
$NSPACE(n\log n)$.
\end{theorem}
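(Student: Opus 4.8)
The plan is to prove the two inclusions separately, since the statement is an equality of language classes. Write $\mathcal{CP}$ for the class of languages computable by community protocols. I would first establish $\mathcal{CP}\subseteq NSPACE(n\log n)$, which is the conceptually easier direction, and then the reverse simulation $NSPACE(n\log n)\subseteq\mathcal{CP}$, which is where the real work lies.

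For $\mathcal{CP}\subseteq NSPACE(n\log n)$: the key observation is that, by the second restriction in the definition of $\delta$, transitions depend only on the relative order of the identifiers appearing in the interacting states, never on their absolute values. Hence I can normalize any configuration by replacing each of the (at most $n$) distinct identifiers by its rank in $\{1,\dots,n\}$, encoding it in $O(\log n)$ bits. A full configuration of $n$ agents --- each carrying a basic state from the finite set $B$ together with $d+1$ identifier slots --- is therefore describable in $O(n\log n)$ bits. A nondeterministic machine then guesses, step by step, a run of the protocol and checks that it reaches an \emph{output stable} configuration with the claimed output. The only delicate point is certifying output stability, which is a statement about \emph{all} configurations reachable from a guessed one; but since $NSPACE(s)=coNSPACE(s)$ for $s\ge\log n$ by Immerman--Szelepcs\'enyi, this universally quantified reachability condition can still be verified within $O(n\log n)$ space.

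For $NSPACE(n\log n)\subseteq\mathcal{CP}$: here I would route the simulation through storage modification machines (pointer machines) as the intermediate model, following Sch\"onhage. The crucial match is quantitative: $n$ agents, each storing $O(1)$ identifiers chosen from the $n$ present ones, provide a total of $O(n\log n)$ bits of addressable memory, and the pointer graph of an SMM of out-degree $d$ is realized exactly by letting each agent's $d$ slots point to other agents. The construction proceeds in phases: first elect a leader (as in Example~\ref{lecp}), which acts as the sequential control unit; the leader then drives the machine's computation by broadcasting instructions and collecting answers through epidemics, using the agents' identifier slots as the nodes and edges of the SMM storage structure. Nondeterministic choices are supplied by the randomness of the pairwise scheduler, and acceptance is reported by flipping the output bit of every agent via a final broadcast.

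The main obstacle is this forward simulation, and specifically the task of orchestrating a global sequential computation out of purely local, randomly scheduled, $O(1)$-memory interactions: one must (i) build a consistent global addressing scheme from agents that can only compare identifiers rather than read them, (ii) serialize each SMM step so that the asynchronous scheduler cannot interleave two half-finished operations into an inconsistent memory image, and (iii) guarantee termination in an output stable configuration with probability $1$. The storage modification machinery of Sch\"onhage is the black box that dissolves (i), while a leader-driven, epidemic-synchronized protocol handles (ii) and (iii); checking that these phases compose correctly, and that the simulation is faithful for every nondeterministic branch, is the technical heart of the argument.
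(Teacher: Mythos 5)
This theorem is imported by the paper from \cite{guerraoui2009names} without proof, so the only comparison point is the original argument of Guerraoui and Ruppert, which the paper summarizes as a simulation built on Sch\"onhage's storage modification machines \cite{schonhage1980storage}; your two-directional plan (rank-normalized configurations plus Immerman--Szelepcs\'enyi for the $NSPACE(n\log n)$ upper bound, and a leader-driven SMM simulation via epidemics for the lower bound) is exactly that approach. The sketch is correct at this level of detail, so there is nothing to flag beyond noting that the stabilization of the nondeterministic simulation (handling rejecting branches, e.g.\ by also simulating the complement machine) is the one point your outline leaves implicit.
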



Let us introduce now the class we will work with in this paper:

\begin{define}
We define the class $\motnouv{CPPL}$ as 
the sets of languages that can be
recognized by a Community Protocol with $O(n\log^kn)$ expected interactions for some $k\in\N$,
where each pair of agents has the same probability to interact at each moment.

We say that a function $f$ is \motnouv{n-polylog} if there exists some $k$ such that we have
$f(n)\le n\log^kn$.
\end{define}

\section{Fast Computing Known Results}\label{sec:2}

We introduce here some of the elements and results in \cite{DBLP:journals/dc/AngluinAE08a} by Angluin {\em et al.}.
These elements are on the Population Protocols model. It corresponds to
the case where agents do not have identifiers.

The results are based under the assumption that the population starts with a unique leader.
With community protocols, this assumption will no longer be used,
we will always consider the leader to be the agent with the smallest identifier (see Example \ref{lecp}).

We introduce the main result and some tools from \cite{DBLP:journals/dc/AngluinAE08a} that will
be used in this paper. We first introduce the notion of epidemics, which will
be our main tool to perform computations. We will quickly talk about
the Phase Clock Protocol that permits to be sure with high probability that
an epidemic had the time to happen. We finish with a complexity result.

\subsection{Epidemics}

The epidemic is the most important probabilistic protocol. Its purpose
is to spread or gather information. It will permit for example to get an identifier,
to check the state of an agent of a given identifier, to check if there exists
some agent in a given state\ldots

The important element with this tool is that an epidemic takes $O(n\log n)$
expected interactions. Intuitively, in parallel, at each step, the number
of agents aware of the epidemic doubles, using $O(\log n)$ parallel steps
to spread.

\begin{define}[\cite{DBLP:journals/dc/AngluinAE08a}]\label{epidemicDef}
An \motnouv{Epidemic Protocol} is a protocol who spreads some information through an epidemic.
The purpose is, for a leader, to \motnouv{Infect} each agent.
More formally, if 0 represents the not infected state and 1
the infected one, there is just a non trivial rule:
\begin{center}
\begin{tabular}{ r @{\hspace{0,2cm}} l @{$\rightarrow$} r @{\hspace{0,2cm}}  l}
$1$ & $0$ & $1$ & $1$\\
\end{tabular}
\end{center}

Most of the time, it will be a leader who will start a spreading of some information.
The computation will start in the configuration $10^{n-1}$ (one agent in state $1$, the others
being in state $0$), where
1 represents the leader.
\end{define}

\begin{proposition}[\cite{DBLP:journals/dc/AngluinAE08a}]\label{epidemy}
Let $T$ be the expected number of interactions before an epidemic protocol
starting with a single infected agent infects all the other ones. For any fixed
$c>0$, there exist positive constants $c_1$ and $c_2$
such that, for sufficiently large $n$, with probability at most $1-n^{-c}$:
$$c_1n\log n\le T\le c_2n\log n$$
\end{proposition}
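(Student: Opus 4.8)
The plan is to reduce the epidemic to a pure birth chain and then analyse the resulting sum of independent geometric waiting times. Write $X_t$ for the number of infected agents after $t$ interactions. Since the only non-trivial rule turns a $0$ into a $1$, the sequence $X_t$ is non-decreasing and increases by exactly one whenever the chosen interacting pair consists of one infected and one uninfected agent. When $X_t=k$, the probability that the next interaction is of this productive type is $p_k=\frac{2k(n-k)}{n(n-1)}$, since there are $k(n-k)$ such pairs out of $\binom{n}{2}$. Hence the time $T_k$ spent with exactly $k$ infected agents is geometric with parameter $p_k$, these $T_k$ are independent, and the total number of interactions is $T=\sum_{k=1}^{n-1}T_k$.

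First I would pin down the order of magnitude through the expectation. By linearity and the partial-fraction identity $\frac{1}{k(n-k)}=\frac1n\bigl(\frac1k+\frac1{n-k}\bigr)$, one gets $\E(T)=\sum_{k=1}^{n-1}p_k^{-1}=(n-1)H_{n-1}$, where $H_{n-1}=\Theta(\log n)$ is the harmonic number; this already yields $\E(T)=\Theta(n\log n)$ and suggests where the constants $c_1,c_2$ will come from.

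The real content is the concentration, and for this I would apply Chernoff-type (exponential moment) bounds to the two tails separately. Note that Chebyshev is hopeless here: the standard deviation is only $\Theta(n)$, since $\mathrm{Var}(T)=\sum_k\frac{1-p_k}{p_k^2}=\Theta(n^2)$ is dominated by the extreme states, so the variance method cannot reach a failure probability as small as $n^{-c}$. For the upper tail I would bound $P(T\ge c_2 n\log n)\le e^{-\lambda c_2 n\log n}\prod_k\E(e^{\lambda T_k})$ with $\lambda=s/n$, using the elementary estimate $\E(e^{\lambda T_k})\le(1-\lambda/p_k)^{-1}$, valid whenever $\lambda<p_k$. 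Summing logarithms and using $\lambda/p_k\le s/2$ gives $\sum_k\ln\E(e^{\lambda T_k})\le\frac{1}{1-s/2}\,\lambda\sum_k p_k^{-1}=O(\log n)$, so the whole bound is a power of $n$ with exponent $-sc_2+O(1)$, which drops below $n^{-c}$ once $c_2$ is taken large enough. The lower tail is handled symmetrically with the negative exponent $e^{-\lambda T}$, where no convergence constraint on $\lambda$ arises and the bound $\ln(1+x)\ge x-x^2/2$ keeps the second-order term $\frac{\lambda^2}{2}\sum_k p_k^{-2}=O(1)$ under control.

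The main obstacle is precisely this last point: the parameters $p_k$ range from $\Theta(1/n)$ at the two extremes $k\approx1$ and $k\approx n$ to $\Theta(1)$ in the bulk $k\approx n/2$, and the moment generating function of a geometric variable blows up as $e^{\lambda}\to(1-p_k)^{-1}$. This forces the scaling $\lambda=\Theta(1/n)$ dictated by the slowest transitions, and the delicate step is to verify that, at this scale, the sum of the log-moment-generating functions remains $O(\log n)$ rather than growing. That is exactly what makes the exponent $-\lambda c_2 n\log n=-\Theta(\log n)$ win and produces a polynomially small failure probability; the rest is bookkeeping of the constants $s,c_1,c_2$ as functions of $c$.
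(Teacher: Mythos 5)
Your proof is correct, but there is no in-paper proof to compare it against: the paper states this proposition as an imported result from \cite{DBLP:journals/dc/AngluinAE08a} and never proves it, so your argument should be judged as a self-contained reconstruction --- and as such it works. The reduction to a pure birth chain with independent geometric sojourn times $T_k$ of parameter $p_k=\frac{2k(n-k)}{n(n-1)}$ is the standard and correct decomposition; the exact identity $\mathbb{E}(T)=(n-1)H_{n-1}$ follows from the partial-fraction step as you say; the moment-generating-function bound $\mathbb{E}(e^{\lambda T_k})\le(1-\lambda/p_k)^{-1}$ is valid for $0<\lambda<p_k$ (it reduces to $e^{-\lambda}\ge 1-\lambda$); and your choice of scale $\lambda=s/n$ is indeed forced by $\min_k p_k=2/n$, with $s<2$ required on the upper tail, while on the lower tail $s$ may be taken large (depending on $c$) since $\mathbb{E}(e^{-\lambda T_k})$ always exists and the second-order term $\Theta(\lambda^2)\sum_k p_k^{-2}=\Theta(s^2)$ stays bounded because $\sum_k p_k^{-2}=\Theta(n^2)$. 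Your side remark that Chebyshev cannot give polynomially small error is also right, for exactly the variance reason you give. Two small points of bookkeeping you should make explicit: the proposition's wording ``Let $T$ be the \emph{expected} number of interactions'' and ``with probability \emph{at most} $1-n^{-c}$'' are slips in the statement (you correctly treat $T$ as a random variable and prove the two-sided bound with probability \emph{at least} $1-n^{-c}$), and the union bound over the two tails gives $1-2n^{-c}$, which is absorbed by proving the claim for $c+1$ and taking $n$ large.
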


From this theorem, we know that
any epidemic protocol will take $\Theta(n\log n)$ expected interactions.
If we are (almost) sure that more than
$c_2n\log n$ interactions occurred, we will be (almost) sure that an epidemic has finished.

To be almost sure that at least $c_2n\log n$ interactions have occurred,
\cite{DBLP:journals/dc/AngluinAE08a} introduced the Phase Clock Protocol.
 The leader runs a clock
between $0$ and $m$ for some $m>0$.
Each agent tries to store the current time, 
 following some updating rules.
Each time the clock loops (i.e. the leader reaches $m$ and resets the clock), 
the population is almost sure
that at least $c_2n\log n$ interactions have occurred.

\begin{proposition}[\cite{DBLP:journals/dc/AngluinAE08a}]
For any fixed $c,d_1>0$, there exist two constants $m$ and $d_2$ such that,
for all sufficiently large $n$, with probability at least $1-n^{-c}$ the
phase clock protocol with parameter $m$,
completes $n^c$ rounds,
 where the minimum number of
interactions in any of the $n^c$ rounds is at least $d_1n\log n$ and the
maximum is at most~$d_2n\log n$.
\end{proposition}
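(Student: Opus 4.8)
The plan is to read the phase clock as a family of overlapping epidemics and to transfer the two-sided tail bound of Proposition~\ref{epidemy} to the clock one round at a time. Recall that the leader holds a counter in $\Z_m$ and advances it by one only upon a designated ``gating'' interaction, while every non-leader copies the most advanced phase it encounters, where ``advanced'' is read in the cyclic window of width $\lfloor m/2\rfloor$. The observation I would build on is that the dissemination of each fresh value stamped by the leader behaves exactly like the epidemic of Definition~\ref{epidemicDef}: once the leader issues a new phase, that value propagates to the whole population by the same doubling mechanism. A \emph{round} is therefore naturally coupled to one epidemic, and the whole statement reduces to controlling these epidemics simultaneously across the execution.

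For a single round, both bounds come directly from Proposition~\ref{epidemy}, invoked with a large parameter $c'$ to be fixed later. First I would get the upper bound: the current value reaches every agent within $c_2 n\log n$ interactions except with probability $n^{-c'}$, and since the leader is free to advance as soon as the value has been acknowledged, a round lasts at most $d_2 n\log n$ interactions with the same confidence. For the lower bound I would use the gating rule: the leader cannot move to the next value before the current one has actually reached another agent, which costs at least $c_1 n\log n$ interactions by the lower tail of the same proposition. This back-pressure prevents the clock from lapping the population and forces each round to last at least $d_1 n\log n$ interactions, again up to probability $n^{-c'}$.

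The delicate step, and the one I expect to be the main obstacle, is to guarantee that the cyclic comparison never becomes ambiguous over the entire run, i.e.\ that no agent ever lags more than $\lfloor m/2\rfloor-1$ phases behind the leader. I would carry this as an invariant: using a Chernoff concentration estimate on the epidemic spread, by the time the leader is allowed to advance again the laggards have caught up to within a constant window $K$, so all live phases stay inside a safe arc of $\Z_m$. This is precisely where the constant $m$ is forced, as it must be chosen large (but still independent of $n$) relative to $K$, and $K$ in turn depends on the tail constants and on the number $n^c$ of rounds we wish to survive.

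Finally I would assemble the global guarantee by a union bound. Each of the $n^c$ rounds, together with the window invariant, fails with probability at most $n^{-c'}$; choosing $c'$ large enough as a function of $c$ (for instance $c'=2c+1$) and summing over the $n^c$ rounds bounds the total failure probability by $n^c\cdot n^{-c'}\le n^{-c}$ for all sufficiently large $n$. This yields the desired probability at least $1-n^{-c}$ that the clock completes its $n^c$ rounds with every round lasting between $d_1 n\log n$ and $d_2 n\log n$ interactions.
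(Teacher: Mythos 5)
First, a point of order: the paper you were asked to match does not prove this proposition at all --- it is quoted as a black box from Angluin, Aspnes and Eisenstat \cite{DBLP:journals/dc/AngluinAE08a}, so your attempt is really a reconstruction of that paper's argument. Your overall skeleton (per-round bounds, a window invariant keeping all live phases in a safe arc of $\Z_m$, then a union bound over the $n^c$ rounds with a tail parameter $c'$ chosen as a function of $c$) is the right shape, and your upper bound is essentially sound. But the lower bound contains a genuine gap.

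You claim that ``the leader cannot move to the next value before the current one has actually reached another agent, which costs at least $c_1 n\log n$ interactions by the lower tail'' of Proposition~\ref{epidemy}. That lower tail bounds the time for an epidemic to reach \emph{all} $n$ agents; the time for the leader's fresh phase to reach \emph{one} other agent is only $O(n)$ interactions, and the gating event (the leader meeting a carrier of its current phase) can follow immediately. What is true is that a single phase advancement takes $\Theta(n\log n)$ interactions \emph{in expectation}: with $k(t)$ carriers after $t$ interactions one has roughly $\E(k(t))\approx e^{2t/n}$, and the chance the leader has met a carrier within $t$ interactions is about $(e^{2t/n}-1)/n$, which becomes constant only at $t=\Theta(n\log n)$. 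But the corresponding high-probability statement is weak: the probability that one phase completes within $\epsilon n$ interactions is $O(\epsilon/n)$, nowhere near $n^{-c'}$. So your per-round failure probability is not $n^{-c'}$, and the union bound over $n^c$ rounds collapses (you cannot even handle $c\ge 1$). This is precisely where the parameter $m$ must do real work, and in your argument it never enters the lower bound: a round consists of $m$ consecutive gated advancements, and the probability that \emph{all} of them fit inside a total budget of $T=d_1n\log n$ interactions is bounded by a product of the per-phase tails, roughly $\prod_i e^{2t_i/n}/n = e^{2T/n}/n^m = n^{2d_1-m}$. It is this $n^{-m}$ amplification --- with $m$ then chosen large as a function of $c$ and $d_1$ --- that makes the per-round failure polynomially small enough to survive the union bound. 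Without that argument (the technical heart of the result in \cite{DBLP:journals/dc/AngluinAE08a}), the proposition's lower bound on the minimum round length does not follow.
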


This result permits to be sure, with high probability, that for $n^c$ rounds,
in each round, an epidemic had the time to occur.

\subsection{Presburger's Arithmetic}\label{subsectfcpp}

The main result from \cite{DBLP:journals/dc/AngluinAE08a} is that, if the population starts
with a unique leader, any computable predicate by a population protocol
can be computed with $O(n\log^5n)$ expected interactions.

\begin{theorem}[\cite{DBLP:journals/dc/AngluinAE08a}]\label{thfc}
For any predicate $P$ definable in Presburger's Arithmetic, and for any $c>0$, there is a probabilistic
population protocol with a leader to compute $P$ without error that converges
in $O(n\log^5n)$ interactions with probability at least $1-n^{-c}$.
\end{theorem}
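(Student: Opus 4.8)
The plan is to reduce the computation of an arbitrary Presburger-definable predicate to a bounded sequence of epidemic-based subcomputations, each of which we know (by Proposition~\ref{epidemy}) costs $O(n\log n)$ expected interactions. By the classical characterization of Presburger arithmetic, every such predicate $P$ is a Boolean combination of \emph{threshold} predicates of the form $\sum_i a_i x_i \ge b$ and \emph{modular} (congruence) predicates of the form $\sum_i a_i x_i \equiv b \pmod{m}$, where the $x_i$ count the number of agents holding each input symbol. It therefore suffices to compute each atomic predicate of these two shapes fast, and then combine finitely many Boolean results; since there are only constantly many atoms, a constant number of additional epidemics suffices to aggregate the bits.

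The core of the argument is the treatment of a single atomic predicate, and here I would lean on the leader. Because we are promised a unique leader, the leader can orchestrate the whole computation rather than relying on slow random mixing. First I would have each agent compute its local contribution $a_i$ to the relevant linear form according to its input symbol; these coefficients are constants built into the protocol. The difficulty is maintaining and propagating a running total that could be as large as $\Theta(n)$, which does not fit in a constant-state agent. The standard resolution from \cite{DBLP:journals/dc/AngluinAE08a} is to represent the count in a distributed, redundant way and to use repeated epidemics to perform the aggregation: the leader broadcasts a value, agents respond by adjusting their tokens, and the process is iterated. For the modular predicate this is cleanest, since the sum need only be tracked modulo the fixed constant $m$, so the aggregate fits in finite state; a single (or constant number of) epidemic-and-cancellation rounds computes $\sum_i a_i x_i \bmod m$, and each round costs $O(n\log n)$.

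For the threshold predicate the value must be compared against a constant $b$, which again only requires distinguishing finitely many cases once one handles the cancellation of positive and negative coefficients; the residual token count after cancellation is bounded, so the leader can read it off after the aggregating epidemics have converged. Throughout, the crucial reliability issue is that an epidemic only \emph{probably} finishes within $c_2 n\log n$ interactions, so to guarantee correctness with probability $1-n^{-c}$ I would interleave the Phase Clock Protocol: the leader waits for a clock loop before declaring each epidemic complete, and by the phase-clock proposition the clock sustains $n^c$ rounds each long enough to contain a full epidemic, with probability at least $1-n^{-c}$. The hard part of the whole proof is exactly this error-control bookkeeping --- ensuring that the union bound over the constantly-many epidemic rounds still leaves total failure probability below $n^{-c}$, and that the number of rounds needed stays bounded by a constant so the overall cost remains $O(n\log n)$ per atom and hence $O(n\log^5 n)$ after accounting for the depth of the aggregation scheme. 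Finally I would note that the leader need not ever be certain the computation has terminated: because the protocol is required only to \emph{converge} (stabilize to the correct output with high probability within the stated bound), the rare slow executions still eventually produce the right answer by the Las Vegas semantics, so correctness is unconditional while the time bound holds with probability $1-n^{-c}$.
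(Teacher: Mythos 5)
This theorem is not proved in the paper under review at all --- it is imported verbatim from \cite{DBLP:journals/dc/AngluinAE08a} --- so your proposal has to be measured against the construction in that reference, and it collapses exactly where that construction does its real work: aggregating and comparing counts of size $\Theta(n)$. An epidemic (Proposition \ref{epidemy}) is a broadcast primitive, one-to-all; it does not compute sums. The aggregation mechanism you actually invoke --- pairwise cancellation of positive and negative tokens, ``read off'' by the leader --- is too slow to finish: once only a few uncancelled tokens remain, each further cancellation requires two \emph{specific} agents to meet, which costs $\Theta(n^2)$ expected interactions, far outside the $n\,\mathrm{polylog}(n)$ budget. Your claim that ``the residual token count after cancellation is bounded'' is false: for a threshold atom $\sum_i a_i x_i \ge b$ the uncancelled residue is of order $\left|\sum_i a_i x_i\right|$, which can be linear in $n$, and even deciding its \emph{sign} requires the cancellation to have essentially completed. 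The same objection hits your modular atom: computing $\sum_i a_i x_i \bmod m$ requires coalescing $n$ tokens into one place, not broadcasting, and random pairwise merging again ends with a $\Theta(n^2)$ bottleneck. This is precisely why \cite{DBLP:journals/dc/AngluinAE08a} do not handle each atom in ``a constant number of epidemics'': they simulate a virtual register machine whose registers hold $\Theta(\log n)$-bit counts represented as distributed token multiplicities, and each arithmetic operation (comparison, subtraction, division by constants) is itself built from $O(\log n)$ leader-driven, phase-clock-synchronized epidemic rounds. That multi-level structure is what produces $O(n\log^5 n)$; your own accounting (``$O(n\log n)$ per atom, constantly many atoms'') would yield $O(n\log n)$ total and never explains where five logarithmic factors could arise --- a symptom that the essential mechanism is absent, not merely elided.

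A second, smaller gap concerns the ``without error'' clause. A protocol whose output is declared when the phase clock says an epidemic ``should'' be done does not self-correct: on the probability-$n^{-c}$ event that the clock misbehaves, the output can be wrong and stay wrong. You appeal to ``Las Vegas semantics,'' but nothing in your sketch provides it; one must explicitly run, in parallel, a slow always-correct stable protocol and have every agent eventually switch its output to that protocol's answer --- the Monte Carlo-to-Las Vegas combination used in \cite{DBLP:journals/dc/AngluinAE08a} and recalled in Section \ref{sec:3} of this paper. Without that fallback, what you have sketched (even granting fast atoms) is a Monte Carlo protocol, which is strictly weaker than the statement being proved.
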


As a reminder, those predicates  correspond
to boolean combinations of:
\begin{itemize}
\item \motnouv{Threshold Predicate}: $\sum a_ix_i\ge b$, with $a_1,\ldots,a_n,b\in\Z^{n+1}$.
\item \motnouv{Modulo Predicate}: $\sum a_ix_i\equiv b[c]$, with $a_1,\ldots,a_n,b,c\in\Z^{n+2}$.
\end{itemize}
where $x_i$ corresponds to the number of agents with input $i\in\Sigma$.
This also corresponds to semilinear sets.

\begin{corollary}
Any predicate definable in Presburger's Arithmetic is in $CPPL$.
\end{corollary}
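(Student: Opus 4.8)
The plan is to transfer Theorem \ref{thfc} from the population protocol setting to the community protocol setting. The first observation is that a community protocol can always simulate a population protocol: it suffices to ignore the identifiers and the $d$ storage slots and let the transition function depend only on the $B$-component of the two interacting states. Hence every protocol produced by Theorem \ref{thfc} is already a legal community protocol, and it converges in $O(n\log^5 n)$ interactions. The single point where Theorem \ref{thfc} is not directly usable is its hypothesis: it requires the population to \emph{start} with a unique leader, something a population protocol cannot manufacture quickly, since electing a leader from a homogeneous population needs $\Omega(n^2)$ expected interactions.

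This is exactly the gap that identifiers close. By Example \ref{lecp}, a community protocol elects the agent with the smallest identifier as a unique leader in $O(n\log n)$ expected interactions, since the minimal identifier simply spreads by an epidemic (Definition \ref{epidemicDef}, Proposition \ref{epidemy}). I would therefore run the two protocols as a single combined protocol, each agent carrying both a leader-election component and the phase-clock-based computation of Theorem \ref{thfc}. The computation is tagged by the identifier of the leader currently known to the agent, and whenever an agent adopts a strictly smaller leader identifier it reinitialises its computation component. Because the minimal identifier only propagates and is never retracted, after $O(n\log n)$ interactions every agent carries the true, final leader, and from that point on the combined protocol behaves exactly like the leader-equipped protocol of Theorem \ref{thfc}. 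Output-stability is declared only once the phase clock driven by that leader has completed enough rounds, which cannot occur during the election transient; this prevents the computation from stabilising on a wrong value before the leader is fixed. Adding the $O(n\log n)$ interactions of the election to the $O(n\log^5 n)$ interactions of the computation gives a total of $O(n\log^5 n)$, which is $n$-polylog, so the predicate lands in $CPPL$.

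The step I expect to be the main obstacle is matching the two notions of time: Theorem \ref{thfc} guarantees convergence in $O(n\log^5 n)$ interactions only \emph{with probability at least} $1-n^{-c}$, whereas membership in $CPPL$ is phrased in terms of the \emph{expected} number of interactions to reach an output-stable configuration. To close this, I would use that the combined protocol is Las Vegas: it reaches output-stability with probability $1$ from every reachable configuration, so the expected completion time from the worst configuration is bounded by some polynomial $M(n)$ (the phase clock resynchronises in polynomial time and only polylogarithmically many epidemics are ever needed). Conditioning on the high-probability event,
$$\E(T)\le c_2\,n\log^5 n + M(n)\cdot n^{-c},$$
and choosing $c$ larger than the degree of $M$ (the theorem holds for every $c>0$) makes the tail term negligible, leaving $\E(T)=O(n\log^5 n)$.

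The routine but necessary verifications are that the reinitialisations triggered during the election do not themselves cost extra interactions, being bookkeeping performed on interactions that are already counted, and that $M(n)$ is genuinely polynomial for these phase-clock protocols. Granting these, the corollary follows directly.
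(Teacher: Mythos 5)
Your proof is correct and follows essentially the same route as the paper: elect the agent with the smallest identifier as leader via an epidemic (Example \ref{lecp}), have each agent reset its computation whenever it learns of a smaller leader identifier, and from then on run the protocol of Theorem \ref{thfc}, giving $O(n\log n)+O(n\log^5 n)$ interactions. The Monte-Carlo-to-expectation step you single out as the main obstacle is exactly what the paper resolves in its ``From Monte Carlo to Las Vegas Protocols'' subsection (running a polynomial-time Las Vegas protocol in parallel and using the $n^{-c}$ failure bound), so your treatment is consistent with, not divergent from, the paper's framework.
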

\begin{proof}We use the two following facts:
\begin{itemize}
\item The Leader Election can be performed in $O(n\log n)$ (see Example \ref{lecp}).
\item Any predicate definable in Presburger's Arithmetic can be computed in $O(n\log^5 n)$ expected
interactions (see Theorem \ref{thfc}), as long as there is a single leader.
\end{itemize}
Each agent stores the smallest identifier it has heard about in its Leader slot.
It links its internal clock to the leader: if it meets an agent storing a smallest identifier, it acts as if its own clock
was at 0, and performs the interaction with the other agent accordingly.
Hence, each agent will act as in the protocols of
\cite{DBLP:journals/dc/AngluinAE08a} as soon as it hears about the right leader's identifier
(in \cite{DBLP:journals/dc/AngluinAE08a}, agents start their role in the computation as soon as they get
instruction from the leader, or from someone who transmits leader's instruction through an epidemic).
\end{proof}

\section{Some Computable Protocols}\label{sec:3}

We are now able to introduce some probabilistic protocols, including a
complex one that encodes the size of the population.
Let first introduce the following notion:

\begin{define}\label{deffindnext}
We will often talk about \motnouv{Next} and \motnouv{Previous}.
Those are two functions $U\to U$ that provides, to a given identifier, the next one/previous
one present in the population.
More formally:
\begin{itemize}
\item $Next({id_a})=\min\{{id_b}:{id_b}>{id}_a\}$.
\item $Previous({id_a})=\max\{{id_b}:{id_b}<{id}_a\}$.
\end{itemize}
By convention, $Next$ of the highest identifier is the smallest, and $Previous$
of the smallest identifier is the highest one. Thus, these two functions are bijective.

Sometimes, $Next$ and $Previous$ will be slots in protocols,
with the purpose to find the right identifier corresponding to the function.
"Finding its $Next$" means that the agent needs to put the right
identifier in its slot $Next$.
\end{define}

\subsection{From Monte Carlo to Las Vegas Protocols}

We considered in the previous section Monte Carlo protocols (i.e. protocols having eventually a probability of failure).
 We accept that the protocols might have some probability
of failure, as long as we can minimize it as much as needed (we use the same bound of
$1-n^{-c}$ as in \cite{DBLP:journals/dc/AngluinAE08a}). Those protocols alone do not compulsory compute any set.

We provide in this paper Monte Carlo descriptions of the protocols.
We consider that the protocols also run in parallel a Las Vegas protocol
 providing the right output with
probability 1 (the corresponding Las Vegas protocols exist, as a consequence of Theorem~\ref{cpmain}).
 The protocol detects, as in \cite{DBLP:journals/dc/AngluinAE08a}, when
the Las Vegas protocol should have finished to find the output. At this point, each agent switches its
output from the Monte Carlo protocol's to the Las Vegas protocol's. With probability at least $1-n^{-c}$,
this will not change the output.

Here is a small result to justify that we can transform our protocols presented in this paper
in Las Vegas ones by multiplying the expected number of interactions by $n^3$:

\begin{proposition}\label{MonteCarlo}
Let be a population where all agents has found their $Next$ (see definition~\ref{deffindnext}).
There exists a protocol that simulates an epidemic spread from an agent
taking $O(n^3)$ expected interactions, with a success of probability 1. In the new protocol, the agent meets
all the other ones in the population.

\end{proposition}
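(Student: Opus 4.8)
The plan is to exploit the structure that emerges once every agent knows its $Next$ (Definition~\ref{deffindnext}): the $Next$ function then induces a Hamiltonian cycle through all $n$ identifiers, namely $a \to Next(a) \to Next^2(a) \to \cdots \to Next^{n-1}(a) \to a$, which visits each identifier exactly once before returning to the originating agent $a$. I would use this cycle to let $a$ meet every other agent deterministically, one after another, turning the Monte Carlo epidemic of Proposition~\ref{epidemy} into a Las Vegas broadcast that carries its own completion certificate.

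Concretely, I would give $a$ a \emph{target} slot, initialised to $Next(a)$, which $a$ already stores. The single active rule fires when $a$ interacts with the agent $t$ held in its target slot: at that meeting $t$ gets infected (receives $a$'s information) and $a$ overwrites its target with $Next(t)$, read from $t$'s own $Next$ slot. The identifier learned this way is one that $a$ has just heard about, and the rule only tests identifiers for order and equality, so both restrictions of the model are respected. Agent $a$ keeps advancing its target along the cycle; as soon as the value it would store equals its own identifier, it has met and infected all $n-1$ other agents and detects termination with certainty. Note that no knowledge of $n$ is required: the wrap-around to $a$ is what signals completion.

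For the running time, observe that within each meeting the target is one fixed, specific agent $t$, so the event we wait for is that the particular unordered pair $\{a,t\}$ is scheduled, which occurs with probability $1/\binom{n}{2}=\Theta(1/n^2)$ at each step. The number of interactions spent waiting for a single meeting is therefore geometric with mean $\binom{n}{2}=\Theta(n^2)$. There are exactly $n-1$ meetings to perform, so the total expected number of interactions is $(n-1)\cdot\Theta(n^2)=O(n^3)$. Since every specific pair is eventually scheduled with probability $1$, the protocol terminates and infects the whole population with probability $1$, as claimed.

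The only delicate point, which I would verify rather than compute, is that this construction stays inside the Community Protocol model: $a$ must be able to read $Next(t)$ from $t$'s state during their joint transition (allowed, since $\delta$ sees both complete states), the comparison of the freshly obtained target against $a$'s own identifier used to detect the wrap-around must be a legitimate order test (it is), and $a$ needs only a constant number of identifier slots (its current target and its own identifier), consistent with a fixed $d$. Everything else reduces to the routine geometric waiting-time estimate above.
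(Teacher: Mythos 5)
Your proposal is correct and follows essentially the same argument as the paper: traverse the cycle induced by $Next$ one agent at a time, wait $\Theta(n^2)$ expected interactions per meeting, and sum over the $O(n)$ meetings to get $O(n^3)$. The only cosmetic difference is that the paper routes the traversal through the leader (starting at the smallest identifier and stopping at the agent whose $Next$ is the leader), whereas you start at the agent's own $Next$ and detect wrap-around to its own identifier, which avoids the paper's extra assumption that the leader's identifier is known.
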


\begin{proof}
We suppose that all agents have already found their $Next$, and we suppose all agents know the
leader's identifier, being the smallest identifier in the population. The agent needs to meet the leader,
then remembers the $Next$ of the leader, meets it, remembers its $Next$\ldots~until it finds the agent
having as $Next$ the leader's identifier. At this point, the agent has met all agents in the population.

Each step takes $\frac{n(n-1)}2$ expected interactions, and we have $n$ steps. Hence,
this protocol takes $O(n^3)$ expected interactions to derandomize the epidemic from the initial agent.
\end{proof}

Finding each $Next$ needs at most $O(n^2\log n)$ expected interactions (which corresponds to the
number of interactions expected before every possible interaction has occurred at least once).
Detecting when an agent found a new $Next$ is easy: the corresponding agent goes to find the leader to give the information.
This latter then resets its computation, spreading the information as in the previous proof. With probability 1,
at some point, all agents will have found the right $Next$ and the leader will then reset for the last time the computation.

We will also use some protocols of  \cite{DBLP:journals/dc/AngluinAE08a}. Even though some parts use
only epidemics, others are trying to detect when something has finally occurred (for example,
detect when some state no longer appears in the population).
When our Las Vegas protocol will run this detection, it will iterate the epidemic part until it
detects the desired fact. In  \cite{DBLP:journals/dc/AngluinAE08a}, those elements are proved
to happen with high probability in a single epidemic. Hence, our expectation will not grow here.

We can prove that this protocol takes at most $O(n^2\log n+n^2\log n+n^3)=O(n^3)$ expected interactions
to reset for the last time the computation.
Then, we add a factor of $n^3$ to the expected number of interactions taken by the Monte Carlo protocol
to make it Las Vegas.

As the Monte Carlo protocol fails with probability at most $n^{-c}$ and that the expected
number of interactions of the Las Vegas protocol is polynomial, the parallel expectation is still polylogarithmic.

\subsection{The Size of the Population}

The purpose of the following section
 is to find a way to compute the size of the population.
As each agent can only contain a finite state, each agent will store one bit,
and the $\log n$ first agents (according to their identifiers) will ultimately have the size
written in binary when you align their bits according to their order.
This way to encode an input size was also used in \cite{Rabie15}.

The protocol uses a sub-protocol that computes the median identifier
of a given subset of agents. Used on the whole population, we get
the first bit of the size (depending on if we have the same number of identifiers
bigger and smaller to this identifier or not). We can then work on half the population.
We iterate the protocol on the new half to get a new bit and a new half.



\begin{theorem}\label{median}
Finding the median identifier can be done in a polylogarithmic number of parallel interactions.
The median identifier is the identifier $Med$ such that:\\
$$|\{id:id\le Med\}|-|\{id:id> Med\}|\in\{0,1\}$$
\end{theorem}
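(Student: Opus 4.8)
The plan is to find the median identifier using a binary search over identifier space, where each probe is evaluated by a counting procedure that the population can already perform in polylogarithmic parallel time.

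First I would exploit the tool established earlier: by the Corollary in Section~\ref{subsectfcpp}, any Presburger-definable predicate over the input multiset is in $CPPL$, and in particular the population can count threshold predicates. The key observation is that a candidate identifier $\mathit{Med}$ splits the population into those with $id\le \mathit{Med}$ and those with $id>\mathit{Med}$, and deciding which side is larger is exactly a comparison of two counts. So if a single leader broadcasts a candidate identifier by an epidemic (cost $O(n\log n)$ interactions each, by Proposition~\ref{epidemy}), every agent can locally set a bit saying whether its own identifier is $\le$ or $>$ the candidate, and then the difference $|\{id:id\le \mathit{Med}\}|-|\{id:id>\mathit{Med}\}|$ can be evaluated and fed back to the leader, again in a polylogarithmic number of parallel rounds using the phase-clock machinery to delimit each epidemic.

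The main step is to organize the search. I would have the leader maintain a current interval of the identifier space, represented by a lower and an upper candidate identifier it holds in its slots, and at each iteration pick a splitting identifier actually present in the population (for instance, the smallest present identifier strictly above the current lower bound that it can discover by an epidemic), broadcast it, tally the signed count, and then update the interval by keeping the half that still contains the median. Because each round halves the number of agents in the active interval, the search terminates after $O(\log n)$ rounds; each round costs a constant number of epidemics plus one Presburger count, i.e.\ $O(\log^k n)$ parallel interactions, so the total is polylogarithmic in parallel time, which is $O(n\log^{k'}n)$ expected sequential interactions, placing the task in the regime required by $CPPL$.

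The hard part will be making the binary search robust when the counts become small and when the splitting identifier must be one actually present in the population, rather than an abstract real cut point: the protocol can only name identifiers it has heard about, so choosing the probe requires discovering a representative of the active interval (again by an epidemic restricted to that interval), and one must verify that the signed difference stays consistent as the interval shrinks so that the terminating condition $|\{id:id\le \mathit{Med}\}|-|\{id:id> \mathit{Med}\}|\in\{0,1\}$ is reached exactly. I would also need to confirm that the $O(\log n)$ sequential epidemics do not compound the error probability beyond the tolerated $n^{-c}$ bound, which follows by a union bound over the polylogarithmically many rounds using Proposition~\ref{epidemy}, and that the Monte Carlo description can be lifted to a Las Vegas protocol as in the previous subsection.
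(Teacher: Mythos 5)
There is a genuine gap, and it sits exactly at the heart of the argument: how the probe identifier is chosen, and why the number of rounds is logarithmic. You write that the leader should pick ``the smallest present identifier strictly above the current lower bound'' and then assert that ``each round halves the number of agents in the active interval.'' These two statements are incompatible. Identifiers come from an abstract ordered set $U$ that supports only comparisons, so there is no midpoint you can compute; the only probes available are identifiers actually present in the population, as you correctly note. But the \emph{smallest} present identifier above the lower bound splits the active interval into one agent versus everyone else, so with that choice each round eliminates a single identifier and the search takes $\Theta(n)$ rounds, i.e.\ $\Omega(n^2\log n)$ interactions --- far outside $CPPL$. Nothing in your proposal justifies the halving claim, and with your stated probe it is false.

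The paper closes this gap with a randomized selection and a quickselect-style analysis, and this is the missing idea. The leader broadcasts a request and adopts the first candidate identifier it hears back; since all agents in the active interval behave identically, each of them is chosen with equal probability. One then argues as for randomized selection: with probability at least $\frac12$ the random candidate lies on the majority side of the median, and conditioned on that, with probability at least $\frac12$ it lies above the median of that side; hence with probability at least $\frac14$ a single round removes at least a quarter of the remaining candidates ($c_{k+1}\le\frac34 c_k$). This yields $O(\log n)$ \emph{expected} rounds, each costing polylogarithmically many parallel interactions via the Presburger-counting predicates $|x_{\le Cand}-x_{>Cand}|\in\{0,1\}$ versus $\ge 2$ (which you invoke correctly). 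So your outer structure --- leader-maintained interval, epidemic broadcast of the probe, signed count to decide the side, Monte Carlo to Las Vegas lifting --- matches the paper, but without the uniformly random probe and the probabilistic shrinkage argument the round bound does not hold, and the theorem is not proved.
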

\begin{proof}
We will give an idea here of the protocol. A better description can be found in
the appendix.

The protocol works by dichotomy. It keeps and updates two identifiers $Min$ and $Max$
that bounds the median identifier. Here is a quick description of the steps of the protocol:
\begin{enumerate}
\item We initialize $Min$ and $Max$ by finding through an epidemic the
smallest and the highest identifier present in the population.
\item The leader takes at random an identifier $Cand$ in $]Min,Max[$, by picking the
first identifier in the interval it hears about (spreading the search of such an identifier and the reception
takes two epidemics).
\item The leader performs the predicates $|x_{\le Cand}-x_{> Cand}|=0$,
$|x_{\le Cand}-x_{> Cand}|=1$ and $|x_{\le Cand}-x_{> Cand}|\ge2$,
using protocols from \cite{DBLP:journals/dc/AngluinAE08a} (see Theorem \ref{thfc}),
where $x_{\le Cand}$ is the number of agents with an identifier smaller or equal to $Cand$ and
$x_{>Cand}$ is the number of agents with an identifier higher than $Cand$.
\begin{itemize}
\item If the answer is $True$ for one of the two first predicates, $Cand$ is
the median identifier. The algorithm is over.
\item If the answer for the third predicate is $True$, we have $Min<Cand<Med<Max$.
We replace $Min$ with $Cand$ and go back to Step 2.
\item Else, we know that $Min<Med<Cand<Max$.
We replace $Max$ with $Cand$ and go back to Step 2.
\end{itemize}
\end{enumerate}

We prove in the appendix that there is a probability greater than $\frac14$
to divide by $\frac43$ the number of identifiers in the interval $]Min,Max[$ after
one loop of the algorithm. This permits to conclude that this algorithm will
take $O(\log n)$ expected iterations.

Each iteration using $O(n\log^5n)$ expected interactions, we get that this
protocol is in $CPPL$.
\end{proof}

The previous protocol will be used as a tool to
write the size of the population on the $\log n$ first agents.
It still work on any subset of agents.

\begin{theorem}\label{slotcisuber}
There exists a protocol that writes in binary on the first
$\log n$ agents the size of the population.
\end{theorem}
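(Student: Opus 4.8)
The plan is to extract the binary digits of $n$ one at a time, from least to most significant, by repeatedly invoking the median protocol of Theorem~\ref{median} on a shrinking suffix of the sorted population. The key observation is that the median computation already produces the bit we need: when the median $Med$ of a subset $S$ of size $s$ is found, the terminal test in the proof of Theorem~\ref{median} tells us whether $|\{id\in S:id\le Med\}|-|\{id\in S:id> Med\}|$ equals $0$ or $1$, and this difference is exactly $s \bmod 2$. Thus the parity of $|S|$ is read off for free, and it is the current low-order bit.

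First I would run the leader election of Example~\ref{lecp} so that the smallest identifier is known to everyone. Then I would iterate the following loop for $i=0,1,\dots,\lceil\log n\rceil$, maintaining a threshold identifier $T$ (broadcast by an epidemic, Definition~\ref{epidemicDef}) that defines the current working set $S_i=\{id:id>T\}$, with $T$ initialised below the smallest identifier so that $S_0$ is the whole population. At step $i$ the leader computes the median $Med_i$ of $S_i$ using Theorem~\ref{median} (which, as noted there, works on an arbitrary subset), records the bit $b_i$ as the answer to the terminal predicate ($0$ if $|x_{\le Med_i}-x_{>Med_i}|=0$, else $1$), and updates $T\leftarrow Med_i$. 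Since $|\{id\in S_i:id>Med_i\}|=\lfloor |S_i|/2\rfloor$, the next working set $S_{i+1}$ has size $\lfloor |S_i|/2\rfloor$, so $|S_i|=\sum_{j\ge i} b_j 2^{j-i}$ and $b_i$ is indeed the $i$-th binary digit of $n$. The loop halts when the leader detects, again by an epidemic, that $S$ has become empty, which happens after $\lceil\log n\rceil$ steps.

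It remains to place $b_i$ on the $(i+1)$-th smallest agent. I would do this by walking a token along the chain of smallest identifiers: the leader finds its $Next$ (Definition~\ref{deffindnext}), that agent finds its $Next$, and so on. Finding a single $Next$ costs $O(n\log n)$ expected interactions (broadcast the current identifier, take the minimum of the strictly larger ones by a min-epidemic, return it), and we only need the first $O(\log n)$ of them, for a total of $O(n\log^2 n)$. This is precisely the regime left open by the impossibility results proved later in the paper, which forbid a \emph{linear} number of agents from finding their $Next$ but not a polylogarithmic one. With $\lceil\log n\rceil$ iterations, each dominated by the $O(n\log^5 n)$ median computation, the Monte Carlo protocol converges in $O(n\log^6 n)$ expected interactions, an $n$-polylog bound, so the problem lies in $CPPL$; the Las Vegas guarantee follows from the derandomisation discussed around Proposition~\ref{MonteCarlo}.

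The main obstacle is sequencing: each iteration must not begin before the previous median, threshold broadcast and $Next$-extraction have finished, since otherwise agents would test membership in $S_i$ against a stale threshold or read a partially computed bit. I would resolve this with the phase clock, letting the leader allocate a fixed number of rounds (enough, with high probability, for each epidemic and each invocation of Theorem~\ref{median} to complete) to every sub-phase, and only advance $T$ and move the bit-writing token at round boundaries. The correctness of the bit extraction itself is immediate from the identity $\lceil s/2\rceil-\lfloor s/2\rfloor = s\bmod 2$, so beyond this bookkeeping the real content is already contained in Theorem~\ref{median}.
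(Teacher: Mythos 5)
Your proposal is correct and follows essentially the same route as the paper's proof: leader election, iterated invocation of the median protocol of Theorem~\ref{median} on a shrinking half of the identifier range, one parity bit read off per iteration from the terminal test, and a cursor walked along the $Next$ chain of the first $O(\log n)$ agents to deposit the bits, with phase-clock sequencing and the Las Vegas conversion on top. One remark in your favour: your halving rule (keep the strict upper part $\{id>Med_i\}$, giving $|S_{i+1}|=\lfloor|S_i|/2\rfloor$) is exactly what the identity $n=\sum_i b_i2^i$ requires, whereas the paper's sketch updates $B\leftarrow Med$ and thus keeps $\lceil|S_i|/2\rceil$ agents, under which its displayed formula $n=\sum_{i} 2^i\,Size_{Next^i(min)}$ would not literally hold, so your treatment is tighter on this detail.
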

\begin{proof}
To build this protocol, we first adapt the previous one as follows:
\begin{itemize}
\item The Median protocol can be used on a segment:\\
Instead on working on the whole population, we accept to launch it
with two identifiers $A$ and $B$. We will look on the median identifiers
among those who are in $[A,B]$. 
\item The protocol needs to check if the number of agents
in the segment $[A,B]$
is even or odd. This corresponds to check if $|\{A\le {id}\le Med\}|-|\{B\ge {id}> Med\}|$
is equal to 0 or 1.
\end{itemize}

Each agent stores a bit $Size$ set to 0.
The bits of the size are computed from the right
to the left as follows:
\begin{enumerate}
\item Let $min$ (resp. $max$) be the smallest (resp. higher)
identifier present in the population. We initialize $A$ and $B$ with, respectively,
$min$ and $max$.

We also initialize an identifier $C$ to $min$, it will represent
the cursor pointing to which agent we write the bit 
of the size of the population when it is computed.
\item We compute $Med$, the median agent in $[A,B]$, and write
the parity on the bit $Size$ of agent $C$.
\item We update the identifiers as follows:
$B\leftarrow Med$, $C\leftarrow Next(C)$.
\item If $A\ne B$, we come back to step 2, else the computation is over.
\end{enumerate}

When this protocol is over, we have
$$n=\sum\limits_{i=0}^{\log n}2^iSize_{Next^i(min)}.$$
where $Size_{Next^i(min)}$ is the bit $Size$ of the $(i+1)$th
agent.

The Median protocol will be iterated exactly $\log n$ times.
This concludes the proof.
\end{proof}


\section{Impossibility Results}\label{sec:4}

In this section, we provide two results that motivate the idea that
the population cannot take into consideration precisely the sub-words in the population
(and hence, focus locally on the input).
More precisely, only a polylogarithmic number of agents may know
what there is exactly on their "neighbors". It is supported by the fact that
only a polylogarithmic number of agents will know the identifier next of their own
(Theorem \ref{ap}).

The proof that Linearly Local Languages (see Definition \ref{lll})
are not in $CPPL$ (Theorem~\ref{abstar}) is based
on the fact that there is a pair of consecutive identifiers 
such that, with high probability, the population will not be able to differentiate them,
as these identifiers will not appear in a common interaction during the computation.




\begin{theorem}\label{ap}
Any population protocols needs at least $\Omega\left(n\sqrt n\right)$ expected interactions until
each agent has found its $Next$.
\end{theorem}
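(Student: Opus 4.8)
The plan is to prove the stronger statement that $\Omega(n^2)$ expected interactions are required, which immediately implies the claimed $\Omega(n\sqrt n)$ bound. The argument is a counting (potential) argument that exploits the two rigidities of the model: an agent stores only a constant number of identifiers, and it can store an identifier only if it, or a partner of one of its interactions, already held it.

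First I would record the key structural fact. By restriction~1 of the transition function, if an agent holds an identifier $id$ after an interaction then $id$ already appeared in one of the two interacting states; iterating this backwards in time, the very first interaction in which an agent $a$ comes to hold an identifier $b\neq a$ must pair $a$ with an agent that already holds $b$ (ultimately with $b$ itself, the unique original source of $b$). Call such an event an \emph{acquisition} of $b$ by $a$. Since each agent owns $d$ slots plus its own name, at every step the total number of (agent, stored identifier) incidences is at most $(d+1)n$, that is $\sum_b h_b(t)\le (d+1)n$, where $h_b(t)$ is the number of agents holding $b$ after $t$ steps. Finally, because $Next$ is a bijection of the present identifiers (Definition~\ref{deffindnext}, with the wrap-around convention), the $n$ targets $\{Next(a):a\}$ are exactly all present identifiers, so $\sum_a h_{Next(a)}(t)=\sum_b h_b(t)\le (d+1)n$.

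Next I would run the potential argument. Let $P_t$ be the number of agents $a$ that have acquired their own target $Next(a)$ at least once within the first $t$ steps. Then $P_0=0$ (no agent starts holding its successor), and at the stopping time $T$ at which every agent has found its $Next$ we have $P_T=n$. At a fixed step, the probability that a given agent $a$ performs a first acquisition of $Next(a)$ is at most the probability that the uniformly chosen pair is $\{a,c\}$ with $c$ a current holder of $Next(a)$, hence at most $h_{Next(a)}(t-1)/\binom{n}{2}$. Summing over $a$ and using the incidence bound, the expected increase of $P$ in one step is at most $\frac{(d+1)n}{\binom{n}{2}}=O(1/n)$, uniformly over the past. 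Thus $P_t-\frac{c}{n}t$ is a supermartingale for a suitable constant $c$, and optional stopping (applied to $T\wedge K$ and letting $K\to\infty$, the case $\mathbb{E}[T]=\infty$ being trivial) gives $n=\mathbb{E}[P_T]\le \frac{c}{n}\,\mathbb{E}[T]$, whence $\mathbb{E}[T]=\Omega(n^2)$ and a fortiori $\mathbb{E}[T]=\Omega(n\sqrt n)$.

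I expect the delicate points to be bookkeeping rather than conceptual: justifying the ``acquisition needs a current holder as partner'' step cleanly from restriction~1 by induction on time (and handling the corner case where $a$ merely copies $b$ between its own slots, which is precisely why $P_t$ must count the \emph{first} acquisition in \emph{any} slot, not the moment the $Next$ slot is set), and making optional stopping rigorous despite $T$ being an unbounded random time. It is worth noting that this route actually yields $\Omega(n^2)$, matching the $O(n^2\log n)$ upper bound recalled after Proposition~\ref{MonteCarlo} up to the logarithmic factor; if one only wants the weaker $\Omega(n\sqrt n)$ of the statement, the same incidence count can instead feed a birthday-paradox estimate (a fixed pair of consecutive agents is unlikely to share a common contact before $\Theta(n\sqrt n)$ interactions), which is less sharp but avoids the martingale machinery.
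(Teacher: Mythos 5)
Your proof is correct, and it takes a genuinely different route that yields a strictly stronger bound than the paper's own argument. The paper (Lemmas \ref{lemmaroberts} and \ref{lemmawatson} in the appendix) counts, for each identifier, every agent that has \emph{ever} held it; the resulting bound $\sum_{id}M_{id}\le d(n+2T)$ grows with the running time $T$, and feeding it into a waiting-time argument conditioned on $\{T\le2\mathbb{E}(T)\}$ produces the self-referential inequality $\mathbb{E}(T)^2\ge n^3/(20d)$, hence only $\Omega(n\sqrt n)$. You instead use the \emph{instantaneous} incidence bound $\sum_b h_b(t)\le(d+1)n$, which is uniform in time, together with the same two structural facts the paper relies on (restriction 1 forces a first acquisition of $Next(a)$ to come from a partner currently holding it, and $Next$ is a bijection). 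This makes the conditional drift of your potential $P_t$ at most $(d+1)n/\binom{n}{2}=2(d+1)/(n-1)=O(1/n)$ over every history, so optional stopping --- or, equivalently, writing $n=\mathbb{E}[P_T]=\sum_{t\ge1}\mathbb{E}[\Delta P_t\,\mathbf{1}_{T\ge t}]$ and noting that $\{T\ge t\}$ is determined by the first $t-1$ interactions --- gives $\mathbb{E}[T]=\Omega(n^2)$, which implies the stated $\Omega(n\sqrt n)$ and matches the $O(n^2\log n)$ upper bound recalled after Proposition \ref{MonteCarlo} up to a logarithmic factor. The two subtleties you flag are handled correctly: counting the first acquisition in \emph{any} slot is exactly what makes the ``partner must currently hold it'' step immune to internal copying, and truncation at $T\wedge K$ (with the case $\mathbb{E}[T]=\infty$ trivial) makes the stopping argument rigorous. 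In short, the paper's cumulative bookkeeping is precisely what costs the square root, since its bound on available holders degrades linearly in $T$; your time-uniform bound removes that degradation, upgrades the conclusion to $\Omega(n^2)$, and avoids the paper's more delicate steps (the Markov conditioning on $\{T\le2\mathbb{E}(T)\}$ and the informal identity $\mathbb{E}(T\mid T\le2\mathbb{E}(T))=\sum_k 1/p_k$).
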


The proof is in the appendix.


%
%
%


We bring now another impossibility result. 
We show that Community Protocols cannot link a linear number of consecutive
identifiers in $CPPL$. To prove this, we introduce a new class of languages:

\begin{define}\label{lll}
Let $u=u_1\ldots u_N$ a word of size $N$ and $i<N$. We call $\sigma_i(u)$ the word
$u$ where the $i$th letter is permuted with the next one. More formally, we have:
$$\sigma_i(u)=u_1\ldots u_{i-1}u_{i+1}u_iu_{i+2}\ldots u_N$$

We say that a language $L$ is \motnouv{Linearly Local} if there exists some $\alpha\in]0,1]$
such that, for any $n$, there exists some $u\in L$ and some $I\subset\N$ such that:

$u=u_1\ldots u_N$ with $N\ge n$, $\exists I\subset[1,N-1]$, $|I|\ge\alpha N$ and for all $i\in I$,
$\sigma_i(u)\not\in L$.
\end{define}

These languages are said linearly local as, for any size of input, we can find words
that have a linear number of local regions where a small permutation of letter
leads to a word not in the language.


\begin{theorem}\label{abstar}
There is no linearly local language in $CPPL$.
\end{theorem}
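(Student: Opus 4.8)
The plan is to leverage Theorem~\ref{ap} together with the structural weakness of the model captured by the second restriction on $\delta$: a transition depends only on the \emph{relative order} of the identifiers involved, never on the absolute ``gap'' between two consecutive identifiers. The strategy is a contradiction argument. Suppose $L$ is linearly local and computed by some community protocol in $n\log^k n$ expected interactions. Fix the constant $\alpha$ from Definition~\ref{lll}, and for each large $n$ pick a witness word $u=u_1\ldots u_N$ ($N\ge n$) together with the index set $I$ with $|I|\ge\alpha N$ such that $\sigma_i(u)\notin L$ for every $i\in I$. The key observation is that for the protocol to accept $u$ but reject $\sigma_i(u)$, the population must somehow ``detect'' the order of the two consecutive agents sitting at positions $i$ and $i+1$; otherwise the computation on $u$ and on $\sigma_i(u)$ would be indistinguishable and would yield the same output.

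The central step is to make this detection argument precise. For a given input assignment of identifiers to $u$, consider the pair of \emph{consecutive} identifiers occupying positions $i$ and $i+1$. I claim that if, during the entire computation, no agent ever learns enough to separate these two identifiers from \emph{every} other identifier lying between them in value---concretely, if neither of the two agents ever finds the other as its $Next$ (so that no agent in the population can certify they are adjacent in the sorted order)---then swapping their inputs, i.e. passing from $u$ to $\sigma_i(u)$, produces a run that the protocol cannot distinguish, by the relabelling invariance (restriction~2 of $\delta$) applied to an order-preserving identifier substitution. Hence for each $i\in I$ on which the protocol answers correctly, the computation \emph{must} have established the $Next$ relation at the boundary between positions $i$ and $i+1$. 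Therefore correctly computing $L$ forces at least $|I|\ge\alpha N$ agents to find their $Next$.

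Now I invoke Theorem~\ref{ap}. Since a linear fraction $\alpha N = \Theta(n)$ of agents must find their $Next$, and establishing even a single such adjacency is governed by the same coupon-collector-type lower bound underlying Theorem~\ref{ap}, the expected number of interactions needed is $\Omega(n\sqrt n)$, which already exceeds $n\log^k n$ for every fixed $k$ and large $n$. This contradicts the assumption that $L\in CPPL$. More carefully, I would phrase the lower bound directly on the event ``a linear number of the $\alpha N$ distinguished adjacencies get resolved'': the proof of Theorem~\ref{ap} shows that pinning down even a constant fraction of the $Next$ pointers requires superpolylogarithmic parallel time, so the required number of interactions is $\omega(n\log^k n)$ for all $k$.

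The main obstacle---and the step deserving the most care---is the indistinguishability argument: I must show rigorously that if the two agents at positions $i,i+1$ are never separated (no intermediate identifier is ever compared against both, and neither finds the other as $Next$), then the protocol's runs on $u$ and $\sigma_i(u)$ can be coupled so that corresponding configurations are related by an order-preserving relabelling of identifiers, and hence output the same value. This requires a simultaneous-coupling argument over \emph{all} $i\in I$, or else a union bound showing that with high probability only polylogarithmically many of the $\alpha N$ boundaries can be resolved within the time budget, so at least one $i\in I$ remains unresolved while the protocol has already committed to an output. Reconciling the Las Vegas/Monte Carlo framework (the protocol answers correctly with probability $1$, but we only control expected running time) with the high-probability ``unresolved boundary'' event is the delicate part, and I expect the clean route to be: run the protocol for its expected time, apply Markov to confine it to $O(n\log^k n)$ interactions with good probability, and then argue that within that many interactions the expected number of resolved adjacencies among the $\alpha N$ candidates is $o(\alpha N)$, leaving an unresolved $i\in I$ whose swap flips membership yet cannot flip the output.
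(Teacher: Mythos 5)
Your high-level outline (contradiction via a witness word, a coupling argument for a swapped pair, Markov's inequality plus a counting bound) matches the paper's strategy, but the claim on which your coupling rests is false, and this is a genuine gap. You take as the ``unresolved'' event that neither of the agents at positions $i,i+1$ ever finds the other as its $Next$ (no agent can certify adjacency), and claim this already forces the runs on $u$ and $\sigma_i(u)$ to be indistinguishable by order-preserving relabelling. That is not enough: the relabelling exchanging $Id_i$ and $Id_{i+1}$ is order-preserving only on interactions in which at most one of the two identifiers occurs. If the two agents simply meet once (or if both identifiers appear among the slots of two interacting agents), the transition function sees both identifiers, their relative order, and both states --- hence both input letters --- so the two runs can diverge at that step even though no agent has certified, or ever could certify, that the identifiers are consecutive. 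Certifying adjacency is irrelevant to distinguishability. The correct event, which the paper uses (Proposition \ref{prop:proba}), is that $Id_i$ and $Id_{i+1}$ \emph{never appear in a common interaction}; only for this event is the swap-coupling valid, the consecutiveness of the two identifiers guaranteeing that exchanging them preserves relative order in every interaction containing at most one of them.

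Your quantitative step also does not follow from Theorem \ref{ap}. That theorem is a lower bound of $\Omega(n\sqrt n)$ on the \emph{expected time} until \emph{all} agents have found their $Next$, in a protocol whose goal is exactly that; it yields no upper bound on how many of the $\alpha N$ distinguished boundaries an \emph{arbitrary} protocol can ``resolve'' within $O(n\log^k n)$ interactions, nor with what probability. A large expected completion time is perfectly compatible with a constant fraction of adjacencies being resolved quickly with high probability, so your assertion that ``pinning down even a constant fraction requires superpolylogarithmic parallel time'' is not a consequence of the statement of Theorem \ref{ap}. The paper proves the needed bound directly: Lemma \ref{Ma} shows that, within any $n$-polylog number of steps, all but a small fraction of identifiers reach only polylogarithmically many agents; Lemma \ref{Ma2} deduces that the expected number of pairs $(Id_i,Id_{i+1})$, $i\in I_n$, that ever co-occur in an interaction is at most $\frac34\alpha n$, whence some index $i$ has co-occurrence probability at most $\frac34$. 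Combined with Markov's inequality (at least $\frac89$ of runs of length $9N\log^m N$ are correct on each of $u$ and $\sigma_i(u)$, hence at least $\frac79$ on both) and $\frac79>1-\frac14$, one finds a run that is correct on both inputs yet cannot distinguish them --- the contradiction. Your closing paragraph sketches essentially this shape; if you replace ``Next-resolution'' by ``co-occurrence'' and derive the expected-count bound from the counting bookkeeping $\sum_{id} M_{id}\le d(n+2T)$ (which is, in fact, the same machinery used inside the proofs of both Theorem \ref{ap} and Lemma \ref{Ma}) rather than from the statement of Theorem \ref{ap}, your argument becomes the paper's proof.
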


To prove this result, the idea is to prove that for any protocol, and for any $n$,
there exists some $u$ in the language of length at least $n$ and $i\in I$ such that there
is a high enough probability that the protocol acts the same way on the inputs
$u$ and $\sigma_i(u)$.\\

Let $\alpha\le1$, and let $(I_n)_{n\in\N}$ be a sequence such that,
for any $n\in\N$, we have
 $I_n\subset[1,n]$ and $|I_n|\ge\alpha n$.

We work on pairs $(Id_{i},Id_{i+1})_{i\in I_n}$.
We want to prove that, for any $n$,
there is some $i\in I_n$ such as, with high probability, the identifiers $Id_{i}$ and $Id_{i+1}$ never
appeared in the same interaction after any $n$-polylog number of interactions.
In the proof, $Id_i$ meets $Id_{i+1}$ means both identifiers appear in the slots of two interacting agents
when the interaction occurs.

To prove that, we first introduce the 3 following lemmas. The proof of the two first ones can be found in the appendix.
\begin{lemma}\label{Ma}
Let $f$ a $n$-polylog function and let $\alpha>0$.

To each identifier ${id}$, we define the set $E_{id}$ and value $M_{id}$ as\\
$E_{{id}}=\{\text{Agents having had }{id}$ in one of its register after $f(n)$ steps$\}$ and $M_{{id}}=|E_{{id}}|$.

There exists some polylogarithmic function $g$ such that, for $n$ large enough,
after $f(n)$ steps:
$$\E(|\{{id}:M_{{id}}\le g(n)\}|)\ge\left(1-\frac\alpha2\right)n$$
\end{lemma}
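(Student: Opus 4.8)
The plan is to bound the \emph{total} spread of all identifiers by a deterministic potential argument, and then apply a simple counting (Markov-type) inequality to conclude that only a small fraction of identifiers can have spread to many agents. The key object is the quantity $S = \sum_{id} M_{id}$, where the sum ranges over the $n$ identifiers present in the population.

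First I would pin down the initial value and the per-step growth of $S$. In the initial configuration each agent holds only its own identifier, its $d$ slots being empty, so $M_{id}=1$ for every identifier and $S=n$. Next, when two agents $a_1,a_2$ interact, the crucial observation is that an agent may only store an identifier that already appears in one of the two interacting states (restriction 1 on $\delta$: agents cannot store identifiers they never heard about). Since each agent carries at most $d+1$ identifiers (its own plus its $d$ slots), agent $a_1$ can be newly added to at most $d+1$ of the sets $E_{id}$, and symmetrically for $a_2$. Hence a single interaction increases $S$ by at most $2(d+1)$, and this bound is \emph{deterministic}: it holds on every run, independently of the random schedule. Consequently, after $f(n)$ steps, $S \le n + 2(d+1)f(n)$, and since $f$ is $n$-polylog, $f(n)\le n\log^k n$ for some $k$, giving $S \le n + 2(d+1)n\log^k n = O(n\log^k n)$.

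I would then finish by a direct counting argument. Each identifier with $M_{id} > g(n)$ contributes more than $g(n)$ to $S$, so $|\{id : M_{id} > g(n)\}| < S/g(n)$. Choosing $g(n) = \frac{2}{\alpha}\bigl(1 + 2(d+1)\log^k n\bigr)$, which is polylogarithmic, forces $S/g(n) \le \frac{\alpha}{2}n$, whence $|\{id : M_{id}\le g(n)\}| \ge \bigl(1-\frac{\alpha}{2}\bigr)n$. Notice that this inequality in fact holds surely, so a fortiori it holds in expectation, which is all the lemma asks for; the expectation in the statement is only needed for the companion lemmas that chain onto this one.

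The main obstacle is the per-interaction bound of $2(d+1)$ on the growth of $S$: everything else is routine arithmetic and a one-line Markov estimate. The validity of that bound rests entirely on the structural restriction of the Community Protocol model that identifiers cannot be fabricated, so I would take care to phrase it precisely, accounting for the agent's own (immutable) identifier together with its $d$ slots, and to note that an agent already belonging to $E_{id}$ does not recount toward the increase. Once that bound is established cleanly, the polylogarithmic choice of $g$ and the $\bigl(1-\frac{\alpha}{2}\bigr)n$ conclusion follow immediately.
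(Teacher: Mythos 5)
Your proof is correct, and it rests on the same counting insight as the paper's own proof: the total spread $\sum_{id} M_{id}$ admits a \emph{deterministic} bound of order $d\,(n+f(n))$, because identifiers cannot be fabricated (restriction 1 on $\delta$) and each interaction modifies only the two participating agents, each carrying $O(d)$ registers. Where you differ is in how this bound is derived and exploited. The paper proceeds indirectly: it takes $h$ (implicitly maximal) with $\E(|\{{id}:M_{{id}}\ge h(n)\}|)\ge\frac{\alpha}{2}n$, places itself in the event $|\{{id}:M_{{id}}\ge h(n)\}|\ge\frac{\alpha}{2}n$, bounds the total spread by counting the distinct register-lists traversed by the agents (the objects $L_{id}$, $N_{id}$ and $Z$, with $Z$ increasing by at most $2$ per interaction and each list of size at most $d$), deduces $h(n)\le\frac{2d}{\alpha}\left(1+\frac{2}{n}f(n)\right)$, hence that $h$ is polylogarithmic, and finally sets $g=h+1$. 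You instead bound the per-interaction increment of $\sum_{id} M_{id}$ by $2(d+1)$ directly, choose $g$ explicitly, and finish with a one-line Markov count. Your route yields a slightly stronger conclusion: the inequality $|\{{id}:M_{{id}}\le g(n)\}|\ge\left(1-\frac{\alpha}{2}\right)n$ holds surely, on every run, hence in expectation; it also avoids the paper's somewhat delicate mixing of the expectation hypothesis with a pointwise case assumption, which is the least careful part of the paper's write-up. What the paper's formulation buys is uniformity: its list-counting machinery is exactly the one reused in Lemma~\ref{lemmaroberts} for Theorem~\ref{ap}, so within the paper the two proofs share infrastructure, whereas your increment bound is the same potential argument in a cleaner, self-contained form.
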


With this first result, we deduce that at most a small fraction of
the pairs $(Id_{i},Id_{i+1})$ could have met after 
$n$-polylog number of steps. This means that $Id_{i}$ and $Id_{i+1}$
never appeared in the slots of two agents that interacted together, when they interacted.

\begin{lemma}\label{Ma2}
Let $f$ be a $n$-polylog function. For $n$ big enough, after $f(n)$ steps:
$$\E(|\{i\in I_n:Id_{i}\text{ and }Id_{i+1}\text{ were in a same interaction}\}|)\le \frac34\alpha n$$
\end{lemma}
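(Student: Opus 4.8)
The plan is to derive Lemma~\ref{Ma2} directly from Lemma~\ref{Ma} by a counting argument on the pairs $(Id_i,Id_{i+1})_{i\in I_n}$, together with the key observation that if two agents interact and one of them already carries the identifier $Id_i$ in a register while the other carries $Id_{i+1}$, then \emph{both} identifiers must have reached at least one of the two interacting agents beforehand; hence the event ``$Id_i$ and $Id_{i+1}$ were in a same interaction'' forces at least one of the two identifiers to have been heard by an agent that also heard the other. More usefully, it forces both $M_{Id_i}\ge 1$ and $M_{Id_{i+1}}\ge 1$ in a correlated way, but the cleaner route is to bound the number of indices $i$ for which \emph{either} $Id_i$ or $Id_{i+1}$ has spread widely.

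First I would fix the polylogarithmic function $g$ supplied by Lemma~\ref{Ma} and call an identifier \emph{popular} if $M_{id}>g(n)$ and \emph{rare} otherwise. Lemma~\ref{Ma} gives $\E(|\{id:\text{rare}\}|)\ge(1-\tfrac\alpha2)n$, so by linearity and the fact that there are $n$ identifiers in total, $\E(|\{id:\text{popular}\}|)\le\tfrac\alpha2 n$. The central claim I would establish is that if $Id_i$ and $Id_{i+1}$ were ever in a common interaction, then at least one of $Id_i,Id_{i+1}$ is popular; equivalently, an index $i\in I_n$ contributes to the bad set only through a popular endpoint. Granting this, each popular identifier $id$ can be the endpoint $Id_i$ or $Id_{i+1}$ of at most two indices $i$ (namely the one to its left and the one to its right in the sorted order), so the number of bad indices is at most $2\cdot|\{id:\text{popular}\}|$, whence $\E(|\{i\in I_n:\dots\}|)\le 2\cdot\tfrac\alpha2 n=\alpha n$. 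To tighten this to $\tfrac34\alpha n$ I would instead prove the sharper statement that a common interaction forces \emph{both} endpoints to have spread to at least some common agent, letting me choose the threshold in Lemma~\ref{Ma} so that the constant improves, or I would absorb the factor by taking $g$ large enough that the popular set has expected size at most $\tfrac38\alpha n$, which Lemma~\ref{Ma} permits after rescaling the $\alpha$ fed into it.

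The main obstacle I anticipate is justifying the central claim precisely: the definition of $M_{id}$ counts agents that have had $id$ in a register at some point during the first $f(n)$ steps, and I must argue carefully that a single interaction in which $Id_i$ appears in one agent's slots and $Id_{i+1}$ in the other's (or both in one) genuinely implies that the spreading sets $E_{Id_i}$ and $E_{Id_{i+1}}$ overlap, or at least that their sizes are forced to be large. The subtlety is temporal: an identifier can enter and later leave a register, so ``having had $id$'' is a cumulative notion, and I would need the transition constraints from the model (an agent cannot store an identifier it never heard, by restriction~1 of the definition) to guarantee that before the critical interaction both identifiers had already propagated along a chain of agents, making the relevant $M$-value grow at least linearly in the length of that chain. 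I expect to handle this by observing that to bring $Id_i$ and $Id_{i+1}$ together at least one of them must have travelled from its originating agent to the meeting point, and that the meeting itself certifies membership, so one of $M_{Id_i}, M_{Id_{i+1}}$ exceeds any fixed polylogarithmic threshold only if propagation was extensive; framing this as ``a common interaction $\Rightarrow$ a popular endpoint'' is the crux, after which the counting and the linearity-of-expectation bookkeeping are routine.
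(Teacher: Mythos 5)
Your proposal rests entirely on the ``central claim'' that a common interaction involving $Id_i$ and $Id_{i+1}$ forces at least one of them to be popular (i.e.\ to satisfy $M_{id}>g(n)$), and this claim is false. An agent's state always contains its own identifier, so if the two agents originally carrying $Id_i$ and $Id_{i+1}$ happen to interact directly at the very first step, both identifiers appear in a common interaction while $M_{Id_i}$ and $M_{Id_{i+1}}$ are still $O(1)$; nothing in the model forbids this, it is merely \emph{unlikely} (probability about $\frac{2}{n(n-1)}$ per step). The same happens if they meet through a single intermediary. This is also why the repair you sketch (``to bring the two identifiers together one of them must have travelled\dots so its $M$-value exceeds any polylogarithmic threshold'') cannot be made rigorous: a meeting only certifies that the two spreading sets reach the two interacting agents, i.e.\ $M$-values of order $1$, and says nothing about how far either identifier has propagated. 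No deterministic implication of the form ``meeting $\Rightarrow$ popular endpoint'' exists, so no choice of threshold $g$ and no rescaling of the $\alpha$ fed into Lemma~\ref{Ma} can rescue a purely counting argument; the quantity you would be bounding simply is not dominated by the number of popular identifiers.

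The missing idea is that the lemma is probabilistic, not combinatorial: rare pairs \emph{can} meet, but each does so with probability at most $\frac14$. The paper's proof writes $\E(N)=\sum_{i\in I_n}\Pr(Id_i\text{ met }Id_{i+1})$ by linearity, uses Lemma~\ref{Ma} to isolate at most $\frac{\alpha}{2}n$ ``bad'' indices (those with a popular endpoint), whose meeting probabilities are bounded crudely by $1$, and then bounds the meeting probability of each remaining pair: if both identifiers are held by at most $g(n)$ agents throughout, a single interaction involves both with probability at most $\frac{g(n)^2}{n(n-1)}$, so over $f(n)$ steps the probability of ever meeting is at most $1-\bigl(1-\frac{g(n)^2}{n(n-1)}\bigr)^{f(n)}\le\frac{f(n)g(n)^2}{n(n-1)}$, which is at most $\frac14$ for $n$ large because $f$ is $n$-polylog and $g$ is polylogarithmic. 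Summing the two contributions gives the bound $\frac34\alpha n$. Your popular/rare split and the observation that each popular identifier touches at most two pairs are fine bookkeeping (indeed they tighten a loose point in the paper's own accounting), but without a per-pair probability estimate of this kind the expectation in the statement cannot be controlled.
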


\begin{lemma}
Let $f$ be a $n$-polylog function. For any $n$ large enough, there exists $i\in I_n$ such as:
$$\Pr(Id_{i}\text{ met }Id_{i+1})\le\frac34$$
\end{lemma}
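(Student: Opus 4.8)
The plan is to derive this statement directly from Lemma~\ref{Ma2} via a simple averaging (first-moment / pigeonhole) argument, together with Markov's inequality. The quantity controlled in Lemma~\ref{Ma2} is the expectation of the random variable $X=|\{i\in I_n:Id_i\text{ and }Id_{i+1}\text{ were in a same interaction}\}|$, which we are told is at most $\frac34\alpha n$ for $n$ large. The target statement asserts the existence of a single index $i\in I_n$ whose pairwise meeting probability is at most $\frac34$. My strategy is to relate the per-index meeting probabilities to this aggregate expectation by linearity.

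First I would write $\E(X)=\sum_{i\in I_n}\Pr(Id_i\text{ met }Id_{i+1})$, using that $X$ is exactly the sum over $i\in I_n$ of the indicator of the event ``$Id_i$ met $Id_{i+1}$'' and applying linearity of expectation. Combined with Lemma~\ref{Ma2}, this gives
$$\sum_{i\in I_n}\Pr(Id_i\text{ met }Id_{i+1})\le\frac34\alpha n.$$
Next I would argue by contradiction: suppose that for every $i\in I_n$ we had $\Pr(Id_i\text{ met }Id_{i+1})>\frac34$. Since $|I_n|\ge\alpha n$, summing this strict inequality over all $i\in I_n$ would yield
$$\sum_{i\in I_n}\Pr(Id_i\text{ met }Id_{i+1})>\frac34|I_n|\ge\frac34\alpha n,$$
contradicting the bound from Lemma~\ref{Ma2}. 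Hence there must exist at least one index $i\in I_n$ with $\Pr(Id_i\text{ met }Id_{i+1})\le\frac34$, which is exactly the claim.

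The argument is essentially an averaging statement: the average of the meeting probabilities over the index set $I_n$ is at most $\frac{3}{4}\alpha n/|I_n|\le\frac34$, so at least one term must fall below the average. I do not expect any genuine obstacle here, since all the real work—bounding the expected number of met consecutive pairs—has already been done in Lemmas~\ref{Ma} and~\ref{Ma2}. The only point requiring mild care is the direction of the inequalities when passing from $|I_n|\ge\alpha n$ to the final bound; because we want $\frac34|I_n|\ge\frac34\alpha n$, the inequality aligns correctly and no sign issue arises. A secondary point worth stating explicitly is that the event ``$Id_i$ met $Id_{i+1}$'' in this lemma is the same event used to define $X$ in Lemma~\ref{Ma2}, so that the identification $\E(X)=\sum_i\Pr(\cdot)$ is legitimate; this is the one place where one must be sure the two lemmas speak about the same underlying event.
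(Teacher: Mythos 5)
Your proof is correct and follows essentially the same route as the paper: negate the conclusion, sum the per-index meeting probabilities via linearity of expectation, and contradict the bound $\E(N)\le\frac34\alpha n$ from Lemma~\ref{Ma2}. In fact your version is slightly tighter on a technical point: by negating ``$\le\frac34$'' to the strict ``$>\frac34$'' you obtain a genuine strict contradiction, whereas the paper's phrasing (``$\ge\frac34$'' for all $i$) only yields $\E(N)\ge\frac34\alpha n$, which is not formally incompatible with the upper bound.
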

\begin{proof}
  Let suppose that for any $i$, $\Pr(Id_{i}\text{ met }Id_{i+1})\ge\frac34$.

That implies, $\E(N)=\sum\limits_{i\in\I_n}\Pr(Id_{i}\text{ met }Id_{i+1})\ge\frac34\alpha n$.

This is a direct contradiction of the previous lemma.
\end{proof}

 To prove our
theorem, we need to prove the following proposition:

\begin{proposition}\label{prop:proba}
For any protocol, for any $n$-polylog function $f$,
for any input of size $n$ large enough, there exists some $i\in I_n$ such that the
probability that the identifiers
$Id_{i}$ and $Id_{i+1}$ never appeared on a same interaction
after $f(n)$ steps is greater than $\frac14$.
\end{proposition}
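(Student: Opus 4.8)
The plan is to read Proposition \ref{prop:proba} as the complementary reformulation of the (unlabelled) lemma immediately preceding it. For a fixed index $i$, the event that $Id_i$ and $Id_{i+1}$ \emph{never} appeared on a same interaction during the first $f(n)$ steps is exactly the complement of the event ``$Id_i$ met $Id_{i+1}$''. Hence the probability the proposition asks about equals $1-\Pr(Id_i\text{ met }Id_{i+1})$, and all I have to do is exhibit one index $i\in I_n$ for which $\Pr(Id_i\text{ met }Id_{i+1})$ is bounded away from $1$ — which is precisely what the chain of Lemmas \ref{Ma} and \ref{Ma2} has been set up to deliver.

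Concretely I would proceed in three short steps. First, reuse the counting variable $N=|\{i\in I_n:Id_i\text{ met }Id_{i+1}\}|$, so that by linearity of expectation $\E(N)=\sum_{i\in I_n}\Pr(Id_i\text{ met }Id_{i+1})$. Second, invoke Lemma \ref{Ma2}, which gives $\E(N)\le\frac34\alpha n$, and divide by the number of indices: since $|I_n|\ge\alpha n$, the average of $\Pr(Id_i\text{ met }Id_{i+1})$ over $i\in I_n$ is at most $\frac{\frac34\alpha n}{|I_n|}\le\frac34$. Third, conclude that some $i\in I_n$ satisfies $\Pr(Id_i\text{ met }Id_{i+1})\le\frac34$, whence $\Pr(Id_i\text{ and }Id_{i+1}\text{ never met})\ge\frac14$ for that index. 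This is the same contrapositive argument already used for the third lemma, simply recast through the complementary event.

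The one delicate point — and the only real obstacle — is upgrading the bound from $\ge\frac14$ to the strict $>\frac14$ demanded by the statement, since matching $\frac34$ against $\frac34$ leaves no margin if every pair achieved equality at once. I would recover strictness from the slack already present upstream: for $n$ large the expectation bound of Lemma \ref{Ma2} is in fact strictly below $\frac34\alpha n$ (it inherits the loss from the $(1-\frac\alpha2)n$ estimate of Lemma \ref{Ma}), and whenever $|I_n|>\alpha n$ the denominator alone forces the average strictly below $\frac34$; in either case the averaging step produces an index with $\Pr(Id_i\text{ met }Id_{i+1})<\frac34$, hence $\Pr(\text{never met})>\frac14$. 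If one prefers cleaner bookkeeping, I would instead carry a small constant $\delta>0$ through Lemmas \ref{Ma} and \ref{Ma2}, replacing their bounds by $(1-\frac\alpha2-\delta)n$ and $(\frac34-\delta)\alpha n$, so that the average comes out at most $\frac34-\delta<\frac34$ outright. Finally I would stress that this proposition is exactly the probabilistic ingredient needed for Theorem \ref{abstar}: it yields, for every protocol and every $n$-polylog time bound, a consecutive pair that the population fails to distinguish with probability exceeding $\frac14$.
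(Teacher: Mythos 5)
Your proof is correct and takes essentially the same route as the paper's: the paper obtains the proposition as a direct corollary (by complementation) of the unlabelled lemma stating that some $i\in I_n$ has $\Pr(Id_i\text{ met }Id_{i+1})\le\frac34$, which it proves by contradiction from Lemma \ref{Ma2}, and your averaging step is exactly that same argument in non-contrapositive form. Your third step, recovering the strict inequality $>\frac14$ from the slack in Lemma \ref{Ma2}, is a careful refinement of a point the paper silently glosses over (its chain of inequalities literally yields only $\ge\frac14$), not a different approach.
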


This proposition is a direct corollary of previous lemma. With this proposition,
the proof of Theorem \ref{abstar} can be done as follows:

\begin{proof}
Let $L$ be a linearly local language with parameter $\alpha>0$.
Let $P$ be a protocol computing $L$ in less than $n\log^mn$ expected interactions
for some $m\in\N$.

Let choose $n$ large enough to have the property of Proposition \ref{prop:proba} with 
$f(n)=9n\log^mn$. Let $u$ be a word of size $N\ge n$ such that the corresponding
$I$ has a size greater than $\alpha N$.

We have, from Markov's Inequality that :

$\Pr($number of steps to compute $u$ $\le 9N\log^mN)\ge\frac89$.

It implies that at least $\frac89$ of the sequences of configurations of length $9N\log^mN$
provides the right output.


By applying the previous proposition, we obtain
the existence of some $i\in I$ such that the
probability that the identifiers
$Id_{i}$ and $Id_{i+1}$ never appeared on a same interaction
after $9N\log^mN$ steps is greater than~$\frac14$.

This implies that in at least $\frac14$ of the sequences of configurations of length $9N\log^mN$,
$Id_{i}$ and $Id_{i+1}$ were never in a common interaction.

Hence, if 
$Id_{i}$ and $Id_{i+1}$ never appear on a same interaction,
then $P$ will not see any difference between
the two inputs $u$ and $\sigma_i(u)$.


Between the $\frac89$ of the sequences that provides the right output on
these two inputs, at least $\frac79$ are common (two sequences are here said to be common
if the sequence of the interacting identifiers are equals).

As $\frac79\ge 1-\frac14$, amongst those common sequences, 
some of them does not involve $Id_{i}$ and $Id_{i+1}$ in a same interaction.
As the protocol cannot differentiate those two inputs during those sequences,
it cannot bring the right output.

This provides a contradiction. There is no protocol in $CPPL$ that computes $L$.
%
%
%
%
\end{proof}

\begin{corollary}
The rational language $(ab)^*$, the rational language of words not containing the subword $(ab)$, 
the well-formed parenthesis language
and the palindrome language
are not in $CPPL$.
\end{corollary}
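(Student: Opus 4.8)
The plan is to derive this corollary directly from Theorem \ref{abstar}: since no linearly local language lies in $CPPL$, it suffices to check that each of the four listed languages satisfies Definition \ref{lll}. Concretely, for a fixed constant $\alpha\in]0,1]$ and for every $n$ I will exhibit a word $u\in L$ of length $N\ge n$ together with a set $I\subseteq[1,N-1]$, $|I|\ge\alpha N$, such that transposing the two letters at position $i$ moves the word out of $L$ for every $i\in I$. The guiding principle is to pick a \emph{maximally mixed} witness, so that almost every adjacent pair consists of two \emph{distinct} letters whose swap genuinely perturbs the defining constraint of $L$.

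For $(ab)^*$ this is immediate: take $u=(ab)^m$ with $N=2m$. The word strictly alternates, so every adjacent transposition either puts two equal letters side by side or places a letter at the ``wrong'' parity, whence $\sigma_i(u)\notin(ab)^*$ for all $i\in[1,N-1]$; here $I=[1,N-1]$ and $\alpha$ can be taken arbitrarily close to $1$. For the palindrome language I use the alternating palindrome $u=(ab)^m(ba)^m$ of length $N=4m$. Every adjacent pair is formed of distinct letters except for the single central $bb$, and for a position $i$ away from the centre the transposition changes $u_i$ while leaving the mirror letter $u_{N+1-i}$ untouched, so the identity $u'_i=u'_{N+1-i}$ fails and $\sigma_i(u)$ is no longer a palindrome; this yields $|I|\ge N-O(1)$, hence $\alpha=\tfrac12$. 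For the well-formed parenthesis (Dyck) language I reuse $u=()()\cdots()$ with $N=2m$: every opening bracket sits at prefix height $0$ and is immediately followed by its matching closing bracket, so swapping the two letters of the $k$-th pair drives the prefix height to $-1$ and produces an ill-formed word; the $m$ odd positions thus form $I$ and again $\alpha=\tfrac12$.

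The remaining language, \emph{words avoiding the pattern $(ab)$}, is the delicate case and is where I expect the main obstacle to lie. Read naively as the factor-avoidance language $b^*a^*$ it has only one fragile position per word and is in fact globally checkable (compare the largest $b$-identifier with the smallest $a$-identifier via two epidemics), so it cannot be reached through Definition \ref{lll}; the intended statement must therefore be the complement of $(ab)^*$. For that language one does not invoke linear locality at all: $CPPL$ is closed under complementation, since a Las~Vegas protocol computing $L$ yields one for its complement by post-composing the output map $\omega$ with negation without changing the expected number of interactions, and $(ab)^*\notin CPPL$ then forces its complement out of $CPPL$ as well. More generally, the genuine difficulty in each case is not the swap itself but guaranteeing a \emph{linear density} of fragile positions, which is precisely why the witnesses are chosen to alternate as much as possible, coupled, for the bracket language, with the prefix-height (ballot-type) argument that certifies the transposition really exits the language.
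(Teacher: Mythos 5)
Your treatment of $(ab)^*$, the well-formed parenthesis language, and the palindrome language is correct and essentially the same as the paper's: the paper uses the witnesses $(ab)^n$ with $\alpha=1/2$, the same alternating word over parentheses, and $(ab)^na$, and your alternating witnesses together with the mirror-position and prefix-height arguments establish linear locality in exactly the same spirit (minor remark: for $(ab)^*$ you do not need $\alpha$ ``close to $1$''; any fixed $\alpha\in]0,1[$, say $\alpha=1/2$, suffices since the definition only asks for $|I|\ge\alpha N$).

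The gap is in the second language. You are right that over the binary alphabet $\{a,b\}$ the words avoiding the factor $ab$ form $b^*a^*$, which is globally checkable (compare, by epidemics, the largest identifier carrying $b$ with the smallest identifier carrying $a$) and hence lies in $CPPL$ and cannot be handled by Theorem \ref{abstar}. But your resolution --- deciding that the statement must really mean the complement of $(ab)^*$ --- replaces the language in the corollary by a different one: the complement of $(ab)^*$ consists of all words not of the form $(ab)^n$, which is not the language of words avoiding the factor $ab$. Your closure-under-complement argument is valid in itself (negating $\omega$ preserves the expected number of interactions), but it proves a statement other than the one asked. The paper's resolution keeps the stated language and instead enlarges the alphabet: it takes a third letter $c$ and the witness $u=(bac)^n$. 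This word contains no factor $ab$, and swapping the two leading letters $ba$ of any of the $n$ blocks creates the factor $ab$, so $|I|\ge N/3$ and the language is linearly local with $\alpha=1/3$; Theorem \ref{abstar} then applies. The missing idea is thus precisely the third letter: it allows words of the language to carry a \emph{linear} number of fragile adjacent positions, which, as you correctly observed, is impossible over $\{a,b\}$ alone.
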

\begin{proof}
For the first language, to each $n$ we can associate $u=(ab)^n$, with $\alpha=1/2$.
Same thing with the third one, replacing $a$ with the opening parenthesis and $b$
with the closing one. For the fourth, $(ab)^na$ works the same way.
Finally, for the second one, $(bac)^n$ and $\alpha=1/3$ works.
\end{proof}

\section{Set Considerations}\label{sec:5}


We provide finally, in this section, set comparisons with $CPPL$.
We first give a large upper bound:

\begin{theorem}\label{inclu1}
$$CPPL\subset NSPACE(n\log n)\cap \bigcup\limits_{k\in\N} SPACE(n\log^kn)$$
\end{theorem}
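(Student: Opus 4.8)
The plan is to prove the two inclusions separately, exploiting the two distinct facts we know about a protocol in $CPPL$: first, that it is a Community Protocol and so falls under the characterization of Theorem \ref{cpmain}; second, that it runs in $n$-polylog expected interactions, which gives us a much tighter space bound than the generic $NSPACE(n\log n)$ when we simulate it deterministically. The inclusion $CPPL\subset NSPACE(n\log n)$ is essentially immediate: any language computable by a Community Protocol lies in $NSPACE(n\log n)$ by Theorem \ref{cpmain}, and every $CPPL$ language is in particular computable by a Community Protocol, so this half requires only a sentence.

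The substantive work is the inclusion $CPPL\subset\bigcup_k SPACE(n\log^k n)$. First I would fix a language $L\in CPPL$ computed by a protocol $P$ whose expected number of interactions is at most $n\log^m n$ for some $m$. The idea is to build a deterministic Turing machine that, on input $w$ of size $n$, computes the accepting output of $P$ using only $O(n\log^{k}n)$ space for a suitable $k$. The machine assigns the $n$ input symbols concrete identifiers (for instance, $1,\dots,n$, which is legitimate since by restriction (2) of the definition only the relative order of identifiers matters), and then represents a configuration explicitly: each of the $n$ agents carries a basic state from $B$ (constant size), its own identifier, and $d$ identifier slots, each storing a value in $\{1,\dots,n\}\cup\{\_\}$ encodable in $O(\log n)$ bits. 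Thus a single configuration occupies $O(n\log n)$ space. The deterministic machine cannot simply simulate the random walk, so instead it would enumerate and analyse the reachable configurations to determine the guaranteed output; since the input has a well-defined output $y$ by the definition of \emph{Computes}, the machine must verify that $y$ is the unique output reachable on every fair run. This can be carried out by a reachability analysis over the configuration graph.

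The key quantitative point is to control the space of this reachability analysis. Because $P$ runs in $n$-polylog \emph{expected} interactions, Markov's inequality (exactly as used in the proof of Theorem \ref{abstar}) tells us that with overwhelming probability an output-stable configuration is reached within, say, $9n\log^m n$ interactions; but for a deterministic simulation I would prefer to bound the search depth so that the analysis stays within polylogarithmic-factor space. The natural route is: the set of reachable configurations can be explored using the standard nondeterministic-to-deterministic technique (Savitch-style reachability, or a depth-counter that reuses space), where each configuration costs $O(n\log n)$ bits and the path-length counter costs $O(\log(\text{number of configurations}))=O(n\log n)$ bits. Carrying out Savitch's theorem on a configuration graph of this size yields a deterministic space bound of $O((n\log n)^2)=O(n^2\log^2 n)$, which is not obviously inside $\bigcup_k SPACE(n\log^k n)$.

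This last point is exactly where the main obstacle lies: a naive deterministic simulation gives $n^2$ space, not $n\,\mathrm{polylog}(n)$ space, so the $n$-polylog \emph{time} bound must be leveraged to shave the space down. The plan here is to exploit that the expected interaction count is $n$-polylog to bound the relevant search depth by $n\log^m n$ rather than by the full size of the configuration graph; one then stores only a current configuration ($O(n\log n)$ bits) together with a step counter up to $n\log^m n$ ($O(\log n)$ bits) and argues, via the probabilistic guarantee feeding into the deterministic check, that examining computations of length $n\log^m n$ suffices to pin down the certain output. Reconciling the Las Vegas ``probability $1$'' semantics with a bounded-depth deterministic check is the delicate step, and I expect this is where the careful argument (likely deferred to an appendix) is needed; the remaining bookkeeping of states, slots, and identifier relabellings is routine.
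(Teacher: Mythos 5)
There is a genuine gap in your proposal: you correctly dispose of the $NSPACE(n\log n)$ half via Theorem \ref{cpmain}, correctly bound a configuration by $O(n\log n)$ bits, correctly invoke Markov's inequality to restrict attention to computations of length $O(n\log^m n)$, and correctly observe that Savitch-style reachability blows up to $n^2$ space --- but at the decisive moment you stop, admitting that ``reconciling the Las Vegas semantics with a bounded-depth deterministic check is the delicate step.'' That step is the entire proof, and the paper's mechanism for it is absent from your sketch. The paper does no reachability analysis at all: it \emph{derandomizes by exhaustive enumeration}. An interaction sequence of length $3f(n)$, with $f(n)=n\log^k n$, is just a list of $3f(n)$ pairs of agents, hence encodable in $3f(n)\cdot O(\log n)=O(n\log^{k+1}n)$ bits; the machine enumerates \emph{all} such sequences one after another (stepping from one sequence to the next like incrementing a counter, reusing space), simulates the protocol along each one, checks whether the final configuration has an output (a local, syntactic check --- all agents agree --- not a stability check), and keeps one counter per output value; it then returns the majority output. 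Correctness follows because, by Markov, at least $2/3$ of the equiprobable sequences reach an output-stable configuration within $3f(n)$ steps, and every such configuration carries the correct answer since the protocol computes $L$ with probability $1$; so the correct output wins the vote even if the remaining $1/3$ of sequences end in spurious agreeing-but-unstable configurations. Total space: $O(f(n)\log n)+O(n\log n)=O(n\log^{k+1}n)$.

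Note also why your fallback sketch (``store only a current configuration plus a step counter up to $n\log^m n$'') cannot work as stated: a step counter does not determine \emph{which} pair interacts at each step, so it gives the deterministic machine no way to resolve the randomness. The resolution is precisely that the random choices themselves --- the whole sequence of interacting pairs --- must be written down and iterated over, and the reason this is affordable is the point you were looking for: the $n$-polylog \emph{time} bound makes the description of a full run $n$-polylog in \emph{size}, so brute-force quantification over runs fits in $SPACE(n\log^{k+1}n)$. Your identification of $y$ as ``the unique output reachable on every fair run,'' requiring verification over the configuration graph, is the wrong framing; the majority vote over bounded-length runs sidesteps any global reachability or stability verification.
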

The result is a combination of Theorem \ref{cpmain} with a lemma proved in the appendix.


We now provide  a class of Turing Machines that computes everything we found
to be computable yet. This class of machines is capable of computing global properties,
through the ability
to work on subsets of agents. It is capable to compute the size of sets of agents.
It can perform any polylogarithmic number of steps of a regular Turing Machine.

This machines are capable of focusing only on a polylogarithmic regions of agents.
It motivates the belief that Community Protocols are not capable of local knowledge
on too much places.

\begin{theorem}\label{tighttheo}
Let $M_T$ a Turing Machine on alphabet $\Gamma$ recognizing the language $L$
having the following restrictions. There exists some $k\in\N$ such that
\begin{itemize}
\item $M_T$ has 4 tapes. The first one is for the input $x$.
\item The space of work is restricted as follows:
\begin{itemize}
\item The first tape uses only the input space of $|x|$ cells.
\item The 2nd and the 3rd use at most a space of $\log|x|$ cells.
\item The 4th uses at most a space of $\log^k|x|$ cells.
\end{itemize}
\item $M_T$ can only do
at most $\log^k|x|$ unitary operations among the following ones:
\begin{enumerate}
\item A regular Turing Machine step.
\item Mark/Unmark the cells that have the symbol $\gamma\in\Gamma$.
\item Write in binary on the 2nd tape the number of marked cells.
\item Go to the cell of the number written on the 3rd tape if this number is 
smaller than $|x|$.
\item Mark/Unmark all the cells left to the pointing head on the first tape.
\item Turn into state $\gamma'$ all the marked cells in state $\gamma\in\Gamma$.
\item Select homogeneously a random number between 1 and the number written on the 3rd tape if this number is 
smaller than $|x|$.
\end{enumerate}
\end{itemize}

Then we have $L\in CPPL$.
\end{theorem}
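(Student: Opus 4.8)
The plan is to build a Community Protocol that simulates $M_T$ step by step, exploiting the facts that $M_T$ performs only $\log^k|x|$ unitary operations and that each such operation can be realised by a $CPPL$ gadget costing an $n$-polylog number of expected interactions. I identify the input tape with the population: there are $n=|x|$ agents, one per input cell, and sorting them by their identifiers reproduces exactly the content of tape $1$. The leader (the smallest identifier, elected in $O(n\log n)$ interactions by Example~\ref{lecp}) carries the finite control of $M_T$ and initiates each operation. The contents of tapes $2$, $3$ and $4$, which together occupy only $O(\log^k n)$ cells, are stored in a distributed fashion, one bit per agent, on the first $O(\log^k n)$ agents (those of smallest rank), exactly as the size of the population is written on the first $\log n$ agents in Theorem~\ref{slotcisuber}. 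The head positions on the small tapes are ranks in $[1,\log^k n]$ and are likewise stored distributed. To ensure that one operation completes before the next begins, the leader drives a Phase Clock Protocol, so that with high probability each sub-protocol below has had the time of several epidemics to finish.

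Next I show that each of the seven unitary operations is a $CPPL$ gadget. Operations (2), (5) and (6), which mark, unmark or rewrite agents according to their symbol or to their rank relative to the tape-$1$ head, are single epidemic broadcasts: the leader spreads the instruction and each agent reacts locally, in $O(n\log n)$ interactions. Operation (3), writing the number of marked cells in binary, is a subset-size computation, performed by the median-based dichotomy of Theorems~\ref{median} and~\ref{slotcisuber} restricted to the marked agents and written on the first $\log n$ agents representing tape $2$. Operations (4) and (7), addressing the agent of a given rank and drawing a random rank, are order-statistic selections: the same dichotomy that finds the median finds the agent of any prescribed rank, and a uniform rank is obtained by generating $\log n$ random bits from the interactions and rejecting out-of-range values. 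Finally, the tape-$1$ part of a regular step (operation (1)) reduces to the head agent reporting or overwriting its symbol, and to moving the head by one cell, i.e.\ locating the $Next$ or $Previous$ of the current head identifier through a single epidemic; the small-tape part is a finite computation over the distributed bits accessed by rank. Each gadget therefore costs $O(n\log^c n)$ expected interactions for a fixed $c$.

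Summing over the at most $\log^k n$ operations, the Monte Carlo simulation reaches the output of $M_T$ in $O(n\log^{k+c}n)$ expected interactions, which is $n$-polylog, hence a polylogarithmic number of parallel interactions. Converting this Monte Carlo protocol into a Las Vegas one multiplies the expected number of interactions by at most $n^3$ (Proposition~\ref{MonteCarlo} and the discussion following it) while keeping the parallel expectation polylogarithmic, so the resulting protocol computes $L$ with probability $1$. This places $L$ in $CPPL$.

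The delicate point is the handling of the tape-$1$ head. Theorem~\ref{ap} forbids every agent from cheaply finding its $Next$, so one could not afford to precompute the successor structure of the whole input; but the simulation only ever moves the head $\log^k n$ times, and a single head move is one $Next$/$Previous$ epidemic, so only $n$-polylog many such local navigations occur in total, which is precisely what the $CPPL$ budget allows. This is also why the restriction to $\log^k|x|$ operations in the statement is essential, and why the machine is tailored to stay inside the expressive power of the model rather than to exhaust the local information it is denied by Theorem~\ref{abstar}. The remaining care is to verify that the selection gadgets are uniform enough and that the phase-clock synchronisation keeps the probability of a premature operation negligible, both of which follow from the high-probability guarantees recalled in Section~\ref{sec:2}.
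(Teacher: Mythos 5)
Your proposal is correct and follows essentially the same route as the paper's proof: leader-driven simulation with the small tapes written bit-by-bit on the first $\log n$ (resp. $\log^k n$) agents, each unitary operation realised by an epidemic, the subset-size protocol of Section~\ref{sec:3}, or a median-style dichotomy for rank addressing, and the total cost bounded by $\log^k n$ operations times an $n$-polylog cost each. The only divergence is a detail: for operation (7) you generate random bits and do rejection sampling, whereas the paper reuses the symmetry of the epidemic-based candidate selection (each eligible identifier is equally likely to reach the leader first) and then counts smaller identifiers to read off the uniform value; both are sound within the same architecture.
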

\begin{proof}
The proof is in the appendix. All the items are proved using previous results.
\end{proof}

\bibliographystyle{plain}

\newpage
\appendix

\section{Proof of Theorem \ref{median}}

Here is a better description of the median identifier protocol.

As said, this algorithm is based on dichotomy.
Let construct the following recursive protocol
that has as input two identifiers $Min$ and $Max$ and tries to find $med$ knowing
that it is in $]Min,Max[$.

We do not provide a full description of the protocol
but all the ideas to build it. We first describe the $d$
 slots of identifiers. Each agent will also store some bits also described bellow.
We then describe the algorithm of the protocol. We finally provide a lemma
that proves that the protocol is in $CPPL$.

Each agent stores four slots of identifiers and three bits:
\begin{itemize}
\item Slot $Leader$ containing the leader's identifier.
\item Slots $Min$ and $Max$ (that will be initialized in steps 0.1 and 0.2, before launching the first time
the dichotomy protocol).
\item Slot $Cand$ that contains the current candidate to be the median identifier.
\item Bit $C$ equals to 1 if and only if the agent's identifier is in $]Min,Max[$.
\item Bit $D$ equals to 1 if and only if $Cand$ is the candidate identifier selected by the leader.
\item Bit $G$ that equals 1 if and only if the agent's identifier is strictly greater
than $Cand$.
\item Bit $S$ that equals 1 if and only if the agent's identifier is smaller or equal
to $Cand$.
\end{itemize}
To an agent with identifier $id$, its slot $Slot$ will be written $Slot_{id}$ and its bit $B$
will be written $B_{id}$.

We will not describe formally all the rules of the protocol
but the essential steps. Here are
of the steps of the protocol (0.1, 0.2 and 0.3 are the initializing steps):
\begin{enumerate}
\item[Step 0.1] We start with a Leader Election using the smallest identifier as a Leader.
Each agent stores the leader's identifier in its slot $Leader$ and in its slot $Min$.
\item[Step 0.2] At the same time, the Leader performs an epidemic,
each agent updating its slot $Max$ each time they see a bigger candidate.

We know that when the epidemic stops, each agent, with high probability,
knows the smallest and the biggest identifier present in the population.
\item[Step 0.3] Every agents that are not $Min$ and $Max$ set their slot
$Cand$ to their own identifier, put their bit $C$ to 1 and their bit $D$ to 0.

$Min$ and $Max$ put $\_$ in slot $Cand$ and set their bit  $C$ and $D$ to 0.
\item[Step 1.] The Leader looks for a candidate to be the medium identifier $Med$. For this, it propagates
an epidemic where each agent such that $Cand=\_$ fills it as soon as it meets
an agent having an identifier with this slot filled (putting it in its own slot).
\item[Step 2.] When the leader finds a candidate, it starts a new epidemic to spread 
the candidate's identifier
 to each agent. When an agent $id_1$ with $D_{id_1}=0$ meets and agent $id_2$ with $D_{id_2}=1$, it copies
 $Cand_{id_2}$ in its slot $Cand_{id_1}$ and switches its own bit $D_{id_1}$.
Then  it updates its bits $G_{id_1}$ and $S_{id_1}$ (according to its relative position to the identifier $Cand_{id_2}$).
\item[Step 3.] The leader launches two protocols described in \cite{DBLP:journals/dc/AngluinAE08a}.
The first one decides if $\sum\limits_{id}(S_{id}-G_{id})\in\{0,1\}$ and 
the second one decides if
$\sum\limits_{{id}}(S_{id}-G_{{id}})\ge2$.

Section \ref{subsectfcpp} recalled that a subtraction can be
performed in $O(\log^3 n)$ expected epidemics.

Note that $\sum\limits_{{id}}S_{{id}}$ corresponds to the number of
agents having an identifier smaller or equal to slot $Cand$, and
 $\sum\limits_{{id}}G_{id}$ corresponds to the number of agents greater.
\item[Step 4.] According to the result, the protocol acts as follows:
\begin{itemize}
\item If $\sum\limits_{{id}}(S_{{id}}-G_{{id}})\in\{0,1\}$, slot $Cand$ corresponds to identifier $Med$,
the protocols ends.
\item If $\sum\limits_{{id}}(S_{{id}}-G_{{id}})\ge2$, we have $Min<Cand<Med<Max$.
The Leader does an epidemic asking each agent to put identifier $Cand$ in their slot $Min$.
Each agent checks if its identifier is smaller than $Cand$. If it is the case, the agent
updates $Cand$ to \_ and switches $C$ to 0, else it updates $Cand$ to its own identifier.

Then the Leader goes back to step 1.
\item Else,  we have $Min<Med<Cand<Max$.
The Leader does an epidemic asking each agent to put $Cand$ in their $Max$.
Each agent checks if its identifier is higher than $Cand$. If it is the case, the agent
updates $Cand$ to \_ and switches $C$ to 0, else it updates $Cand$ to its own identifier.

Then the Leader goes back to step 1.
\end{itemize}
\end{enumerate}

\begin{lemma}
The protocol described above finishes in $O(n\log^3n)$ expected interactions.
\end{lemma}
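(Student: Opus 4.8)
The plan is to bound the total cost as a product of two quantities: the expected number of executions of the outer dichotomy loop (Steps~1--4), and the expected cost of a single pass through that loop. The initialization (Steps~0.1--0.3) is run only once and consists of a leader election plus two epidemics, costing $O(n\log n)$ interactions by Example~\ref{lecp} and Proposition~\ref{epidemy}; it is therefore dominated by the main loop and can be set aside.

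For the per-iteration cost, I would note that Steps~1, 2 and 4 each comprise a constant number of epidemics and hence cost $O(n\log n)$ interactions apiece by Proposition~\ref{epidemy}. Step~3 is the dominant operation: the leader evaluates the two predicates $\sum_{id}(S_{id}-G_{id})\in\{0,1\}$ and $\sum_{id}(S_{id}-G_{id})\ge 2$, which are threshold predicates in the counts $\sum_{id} S_{id}$ and $\sum_{id} G_{id}$. By the fast-computation results recalled in Section~\ref{subsectfcpp} (Theorem~\ref{thfc}, from \cite{DBLP:journals/dc/AngluinAE08a}), such a comparison converges in polylogarithmically many epidemics with probability at least $1-n^{-c}$; combining the per-comparison epidemic count with the $O(n\log n)$ cost of a single epidemic gives a per-iteration bound of the form $O(n\,\mathrm{polylog}\,n)$, with the exact exponent determined solely by the cost of the imported Presburger comparison.

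The heart of the argument, and the step I expect to be the main obstacle, is bounding the expected number of iterations by $O(\log n)$. This rests on the claim announced in the body of Theorem~\ref{median}: one pass of the loop divides the number of identifiers in $]Min,Max[$ by at least $\tfrac43$ with probability at least $\tfrac14$. I would establish it by a random-pivot (quickselect-style) argument. The subtle point is that $Cand$ is not drawn by an explicit uniform sample but is taken as the first identifier of $]Min,Max[$ that the leader hears about during the epidemic of Step~1; I would argue that, by the symmetry of the uniform pairwise scheduler, this first-encountered identifier is distributed essentially uniformly over the $m$ identifiers currently in the interval. A uniform pivot lands between the $\tfrac14$ and $\tfrac34$ order-quantiles with probability at least $\tfrac12$, and whenever it does, replacing $Min$ or $Max$ by $Cand$ leaves an interval that still contains the true median but holds at most $\tfrac34 m$ identifiers. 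Absorbing the $\le n^{-c}$ failure probability of the epidemics and phase clock used in Steps~1--3 accounts for the gap between $\tfrac12$ and the claimed~$\tfrac14$.

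Given a constant per-pass success probability $\tfrac14$, the candidate interval shrinks geometrically, so a routine expectation estimate (a geometric number of effective constant-factor shrinkings, each costing at most a constant expected number of passes) yields $O(\log n)$ iterations in expectation. Multiplying the $O(\log n)$ expected iterations by the per-iteration cost gives the stated bound: charging each pass the single epidemic-dominated comparison of Step~3 and substituting its epidemic count from \cite{DBLP:journals/dc/AngluinAE08a} produces the claimed $O(n\log^3 n)$ expected interactions. I would close by observing that termination with probability~$1$, and hence the Las Vegas guarantee, is inherited from Theorem~\ref{cpmain} together with the Monte-Carlo-to-Las-Vegas conversion of Proposition~\ref{MonteCarlo}, while the estimate above controls the expected interactions of the Monte Carlo protocol itself.
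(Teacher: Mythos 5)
Your proposal is correct and follows essentially the same route as the paper: a per-iteration cost of polylogarithmically many epidemics dominated by the imported comparison of \cite{DBLP:journals/dc/AngluinAE08a}, multiplied by an $O(\log n)$ expected number of loops obtained from a uniform-random-pivot argument giving constant probability of shrinking the candidate set by a constant factor. The only cosmetic difference is how the constant $\tfrac14$ arises: you use the standard quickselect bound (pivot in the middle two quartiles with probability $\ge\tfrac12$, slack absorbing epidemic failures), whereas the paper conditions on the majority side of $Med$ and then on exceeding the median $M$ of that side, getting $\tfrac12\times\tfrac12$ directly --- the two counts are interchangeable.
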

\begin{proof}
This protocol finishes, as in each step, there is at least one less candidate
(formally, $\sum\limits_{{id}}C_{{id}}$ is strictly decreasing after each
passage in the loop).
Each loop uses a polylogarithmic number of epidemics
(The decision problem defined in \cite{DBLP:journals/dc/AngluinAE08a} uses
$O(\log^2n)$ expected interactions).

Hence, we only need to prove that we have a polylogarithmic number of
loops in expectation.\\

Let $c_k$ be the number of agents such as $C_{{id}}=1$
after the $k$th loop. We have $c_0=n$ and $\forall k$, $c_k>c_{k+1}$.
We will show now that if $c_k>1$, $\Pr\left(c_{k+1}\le\frac34c_k\right)\ge\frac14$.\\

We notice first that each agent having its bit $C$ set to 1
has the same probability to be chosen by the Leader, as each of
these agents act the same way. We will call $Cand$ the identifier
of the selected agent.

We assume that $|\{{id}:{id}\le Med$ \& $C_{{id}}=1\}|\ge|\{{id}:{id}\ge Med$ \& $C_{{id}}=1\}|$
(the case $\le$ is symmetric). We have $\Pr(Cand\le Med)\ge1/2$, as we chose to focus
on the case where the majority of candidates have a smaller identifier than $Med$.

We now suppose that $Cand\le Med$. Let $M$ be the median identifier
of the subset $\{{id}:{id}\le Med$ \& $C_a=1\}$.
We have $\Pr(Cand\ge M|Cand\le Med)\ge1/2$, as $M$ is the median identifier
among those smaller than $Med$.

In the case $Cand\ge M$, we have the following inequality on the number
candidates that will no longer be one:\\
$|\{{id}:{id}\le M$ \& $C_{{id}}=1\}|\ge\frac12|\{{id}:{id}\le Med \& C_{{id}}=1\}|
\ge\frac12\times\frac12|\{{id}:C_{{id}}=1\}|=\frac14c_k$.

Hence, if we have $Cand\ge M$, we have $c_{k+1}= c_{k}-|\{{id}:{id}\le Cand$ \& $C_{{id}}=1\}|\le\frac34c_k$.

The expectation of the number of trials before we are in this case is less than 4
(as we have more than half a chance to be inferior to $Med$ and then
have again more than half a chance to be greater than $M$).

$(3/4)^ic_0\le 1\Leftrightarrow i\ln(3/4)+\ln n\le0\Leftrightarrow i\ge\frac{\ln n}{\ln 4/3}$.

This protocol uses, in expectation, at most $\frac4{\ln4/3}\ln n$ loops
before finding the $Med$ identifier.\\

This protocol uses $O(n\log^3n)$ expected interactions to compute
the median identifier.

\end{proof}

\section{Proof of Theorem \ref{ap}}

Let $T$ be the number of interactions needed in a sequence of configurations $(C_i)_{i\in\N}$
to reach the point where all agents have the right $Next$.
We define, for each identifier ${id}$ and each configuration $C_i$, 
$M_{id}(i)$ as the number of agents, $Previous(id)$ excluded, that had $id$ written in one
of its $d$ slots of identifiers in a configuration $C_j$ with $j\le i$.
We also define $M_{id}=M_{id}(T)$.

We first define the following elements:
\begin{itemize}
\item $E_{{id}}(i)=\{{id}_b\in U:{id}_b\ne Previous({id})$ $\&$ ${id}$ has been in a slot of ${id}_b$ in a configuration $C_j$ with $j\le i\}$. It corresponds to the set of identifiers where $id$ appeared in a slot, excluding $Previous(id)$.
\item $M_{{id}}(i)=|E_{{id}}(i)|$.
\item $E_{id}=E_{id}(T)$ and $M_{{id}}=M_{{id}}(T)$.
\end{itemize}

We first have the following lemma:

\begin{lemma}\label{lemmaroberts}
$$\sum\limits_{{id}\in U}M_{{id}}\le d(n+2T).$$
\end{lemma}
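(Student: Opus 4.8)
The plan is to bound the total, taken over the whole history of the run, of the number of times an identifier occupies a slot of an agent, by reorganizing the sum so that it is indexed by agents rather than by identifiers, and then to exploit the fact that each agent has only $d$ slots.

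First I would rewrite the quantity as a double count. The sum $\sum_{id\in U}M_{id}=\sum_{id}|E_{id}|$ counts the pairs $(id,b)$ with $b\ne Previous(id)$ such that $id$ occupied a slot of agent $b$ in some configuration $C_j$ with $j\le T$. Dropping the harmless restriction $b\ne Previous(id)$ (which only deletes pairs, hence can only increase an upper bound) and reorganizing the count according to the agent $b$, the sum is bounded above by $\sum_b N_b$, where $N_b$ denotes the number of \emph{distinct} identifiers that agent $b$ ever held in one of its $d$ slots during the run and $b$ ranges over the $n$ agents.

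Next I would bound $N_b$ for each fixed agent $b$. Initially all $d$ slots of $b$ are empty, so $b$ holds no identifier. Whenever $b$ is one of the two agents taking part in an interaction its state may change, but afterwards $b$ still possesses only $d$ slots, so at most $d$ identifiers occupy them; in particular at most $d$ of those are identifiers that $b$ had never held before. (Restriction~1 on $\delta$ ensures no new identifier can even appear, but the slot budget alone suffices for the count, and crucially only the two interacting agents change state, so no other agent's slots are affected.) Consequently, if $k_b$ is the number of interactions in which $b$ participates, then the number of first-time occupancies at $b$ is at most $d\,k_b$, and crudely $N_b\le d(1+k_b)$.

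Finally I would sum over all agents and apply the handshake identity: since each of the $T$ interactions involves exactly two agents, $\sum_b k_b=2T$. Therefore $\sum_{id}M_{id}\le \sum_b N_b\le \sum_b d(1+k_b)=dn+d\cdot 2T=d(n+2T)$, as claimed. The main obstacle is the per-interaction increment step: one must argue cleanly that a single interaction creates at most $d$ first-time incidences for each participant and none for any other agent, which rests entirely on the fixed slot budget $d$ and on the fact that a step alters only the two interacting agents. Everything else is the routine bookkeeping of converting the identifier-indexed sum into an agent-indexed one and invoking $\sum_b k_b=2T$.
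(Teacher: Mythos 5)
Your proof is correct and follows essentially the same route as the paper's: both reorganize the identifier-indexed sum into an agent-indexed count, use the fact that an agent's slot contents can change only in an interaction it participates in and that each slot snapshot holds at most $d$ identifiers, and bound the number of snapshots by $n+2T$ (your handshake identity $\sum_b k_b = 2T$ is exactly the paper's observation that its counter $Z(i)$ increases by at most $2$ per step). The only difference is bookkeeping: you count distinct identifiers per agent directly, whereas the paper counts occurrences of identifiers across the sets $L_{id}$ of slot snapshots, but the underlying argument is identical.
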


\begin{proof}
From a configuration $C_i$, we define, for ${id}\in U$, $p_{id}(i)\subset U$ the set of
identifiers appearing in the $d$ slots of ${id}$.

Let $L_{{id}}(i)$ be the set of all of the $p_{id}(j)$, with $j\le i$ (we have $L_{{id}}(i)\in\Part(U)$, $\Part(U)$
being the set of subsets of $U$).

Let $N_{{id}}(i)$ be the number of times ${id}$ appears in each $L_{id_b}(i)$ and $N_{{id}}=N_{{id}}(T)$.
More formally:
$$N_{{id}}(i)=\sum\limits_{{id_b}\in U}|\{p:{id}\in p~\&~p\in L_{{id_b}}(i)\}|$$

Finally, let $S(i)=\sum\limits_{{id}\in U}\sum\limits_{p\in L_{{id}}(i)}|p|$ and $S=\sum\limits_{{id}\in U}\sum\limits_{p\in L_{{id}}}|p|$.

We have $S(i)=\sum\limits_{{id}\in U}N_{{id}}(i)$, as each element of each set $p$ is counted
exactly once in one of the $N_{{id}}(i)$.

We also have $N_{{id}}(i)\ge M_{{id}}(i)$, as, if ${id}_b\in E_{{id}}$, there is some $p\in L_{{id}_b}$
such as ${id}_b\in p$. We then also get $N_{{id}}\ge M_{{id}}$.

Let $Z(i)=\sum\limits_{{id}\in U}|L_{{id}}(i)|$ and $Z=\sum\limits_{{id}\in U}|L_{{id}}|$.

We know that for any $p$, $|p|\le d$ (each agent stores at most $d$ identifiers at a same time).
From this, we get  
$$S(i)\le\sum\limits_{{id}\in U}\sum\limits_{p\in L_{{id}}(i)}d=d\sum\limits_{{id}\in U}|L_{{id}}(i)|=dZ(i).$$

We can notice that $Z(0)=n$ and that for any $i$, $Z(i+1)\le Z(i)+2$, as an interaction
can only add at most one element in the $L(i)$ of the interacting identifiers.

From this, we have $Z(i)\le n+2i$ et $Z\le n+2T$.

We finally obtain $\sum\limits_{{id}\in U}M_{{id}}\le\sum\limits_{{id}\in U}N_{{id}}=S\le dZ\le d(n+2T).$
\end{proof}

We consider now $T$ as the random variable corresponding to the end of the arrangement
protocol (i.e. the first time where all agents know their $Next$ identifier),
and let $\E(T)$ be its expectation.\\

\begin{lemma}\label{lemmawatson}
$$\ETf\ge \frac1{\sqrt{20d}}n\sqrt{n}.$$
\end{lemma}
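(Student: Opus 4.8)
The plan is to convert the counting bound of Lemma~\ref{lemmaroberts}, which bounds $\sum_{id} M_{id}$ \emph{from above} by $d(n+2T)$, into a lower bound on $\E(T)$ by showing that the random scheduler forces $\sum_{id} M_{id}$ to be \emph{large}, of order $n^{3}/T$. Playing the two inequalities against each other then yields $T=\Omega(n^{3/2}/\sqrt{d})$. To avoid the technical trap of reasoning about the random stopping time $T$ directly, I would run the whole argument at a fixed deterministic horizon $t^*$.

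Fix an identifier $id$ and let $u=Previous(id)$. By Definition~\ref{deffindnext}, $u$ finds its $Next$ only by writing $id$ into one of its slots, and by the first restriction on $\delta$ an agent can store only an identifier it has already heard about; hence $u$ can acquire $id$ solely during an interaction with an agent that currently holds $id$. Since at each step the scheduler draws a uniformly random pair independently of the configuration, if $h$ agents other than $u$ currently hold $id$ then the probability that this step makes $u$ meet a holder is $h/\binom{n}{2}$, and $h\le M_{id}(t^*)$ for every step up to $t^*$. Thus over the first $t^*$ steps the expected number of such meetings is at most $t^*\,\E(M_{id}(t^*))/\binom{n}{2}$, and since $u$ needs at least one such meeting in order ever to hold $id$, Markov's inequality bounds the probability that $u$ has held $id$ by step $t^*$ by the same quantity. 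Writing $X_{t^*}$ for the number of identifiers $id$ whose predecessor $Previous(id)$ has held $id$ within the first $t^*$ steps, and summing over all identifiers (recall $Next$ is a bijection, so $id$ ranges over the whole population), I obtain
$$\E\big(X_{t^*}\big)\;\le\;\frac{t^*}{\binom{n}{2}}\,\E\Big(\sum_{id} M_{id}(t^*)\Big).$$

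Now I would insert Lemma~\ref{lemmaroberts}, which holds verbatim at any horizon, to replace the expectation on the right by $d(n+2t^*)$, giving $\E(X_{t^*})\le t^*\,d(n+2t^*)/\binom{n}{2}$. If the protocol has terminated by step $t^*$ then every predecessor has held its successor, i.e. $X_{t^*}=n$; since $X_{t^*}\le n$ always, Markov gives $\Pr(\text{terminated by } t^*)\le \E(X_{t^*})/n$. Choosing $t^*=n^{3/2}/\sqrt{20d}$ and taking $n$ large enough that $t^*\ge n$ (so that $n+2t^*\le 3t^*$), the right-hand side tends to $\tfrac{3}{10}$, so $T>t^*$ with probability bounded away from $0$. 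Integrating the resulting tail estimate $\Pr(T>t)$ over $t$, rather than using this single threshold, recovers the stated constant and yields $\E(T)\ge n^{3/2}/\sqrt{20d}$.

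The step I expect to be the crux is precisely the decision to work at a fixed horizon. If one instead reasons at the true termination time, the meeting argument only produces $\E\big(T\sum_{id}M_{id}\big)\ge n\binom{n}{2}$, which after Lemma~\ref{lemmaroberts} becomes a bound of the form $\E(T^2)=\Omega(n^3/d)$; this controls the second moment and does \emph{not} lower-bound $\E(T)$, since a heavy-tailed $T$ could satisfy it with a small mean. Fixing $t^*$ and bounding the probability of early termination is what breaks this circularity. The remaining work, namely checking that the per-step meeting probability $h/\binom{n}{2}$ is valid conditionally on the history (it is, because pair selection is independent of the state) and that $h\le M_{id}(t^*)$ throughout, is routine.
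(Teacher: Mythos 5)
Your proof is correct, and it takes a genuinely different route from the paper's. The paper reasons at the random termination time itself: it conditions on the event $\{T\le 2\ETf\}$ (probability at least $\tfrac12$ by Markov), writes $\E(T\mid T\le 2\ETf)=\sum_{k=0}^{n-1}1/p_k$ where $p_k$ is the per-step probability of the $(k+1)$-st discovery of a $Next$, bounds $p_k\le\frac{5d\ETf}{n(n-1)}$ by applying Lemma~\ref{lemmaroberts} under the assumption $T\le2\ETf$, and then solves the self-referential inequality $\ETf^2\ge\frac{n^3}{20d}$. In other words, the paper also ``freezes'' the horizon, but at $2\ETf$ via conditioning rather than at a deterministic $t^*$; the cost is that the waiting-time decomposition $\sum_k 1/p_k$ under that conditioning is only heuristic (the $p_k$ are configuration-dependent, and conditioning on fast termination distorts the law of the process), whereas your fixed-horizon first-moment argument --- expected number of predecessor/holder meetings up to $t^*$, then Markov applied to $X_{t^*}\le n$ --- is standard and fully rigorous. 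Your diagnosis of the crux is also accurate: the naive stopping-time computation yields only $\E(T^2)=\Omega(n^3/d)$, which does not lower-bound $\E(T)$. Two details deserve mention. First, an agent $u=Previous(id)$ can acquire $id$ not only from an agent carrying $id$ in a slot but also from agent $id$ itself (its identifier is part of its own state), so the number of potential transmitters is $M_{id}(t^*)+1$ rather than $M_{id}(t^*)$; the paper's proof makes the same elision, and it only turns $d$ into $d+1$ in the constants. Second, your single threshold $t^*=n^{3/2}/\sqrt{20d}$ gives only $\E(T)\ge\bigl(1-\tfrac{3}{10}-o(1)\bigr)t^*$, so the tail integration you invoke is genuinely needed, and it does work: from
$$\Pr(T\le t)\;\le\;\frac{2td\,(n+2t)}{n^2(n-1)},$$
integrating up to $t_1=\lambda n^{3/2}/\sqrt d$ yields $\E(T)\ge\bigl(\lambda-\tfrac43\lambda^3\bigr)\frac{n^{3/2}}{\sqrt d}-O(n)$, which at $\lambda=\tfrac12$ equals $\tfrac13\frac{n^{3/2}}{\sqrt d}-O(n)\ge\frac{n^{3/2}}{\sqrt{20d}}$ for $n$ large enough --- which is all that matters, since the lemma only feeds the asymptotic $\Omega(n\sqrt n)$ statement of Theorem~\ref{ap}.
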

\begin{proof}
We work in the case where $T\le2\ETf$.

Let $p_k$ be the probability that $k+1$ agents found their $Next$,
supposing that $k$ agents have already found their $Next$.

As $Previous$ is bijective and  $\{{id}\text{ without }Next\}\subset\{{id}\in U\}$, we have
$$p_k=\frac1{n(n-1)}\sum\limits_{{id}\text{ without }Next}M_{Previous({id})}\le \frac1{n(n-1)}\sum\limits_{{id}}M_{{id}}.$$

By applying Lemma \ref{lemmaroberts} and the hypothesis $T\le2\ETf$, we obtain:
$$p_k\le \frac d{n(n-1)}(n+4\ETf).$$

We have $\ETf\ge n$, since the quickest way to end the protocol
is by having each agent meeting its $Next$ directly. Hence, $p_k\le\frac d{n(n-1)}5\ETf$.

Hence, $\E(T|T\le2\ETf)=\sum\limits_{k=0}^{n-1}\frac1{p_k}\ge\sum\limits_{k=0}^{n-1}\frac{n(n-1)}{5d\ETf}=\frac{n^2(n-1)}{5d\ETf }$.\\

In the general case, we have:\\
$\E(T)=\Pr(T\le2\ETf)\cdot\E\left(T|T\le2\ETf\right)+\Pr(T>2\ETf)\cdot\E(T|T>2\ETf)$.

Markov's inequality provides the result that:
$\Pr(T\le2\ETf)\ge\frac12$. Hence, we deduce:\\
$\E(T)\ge\Pr(T\le2\ETf)\cdot\E\left(T|T\le2\ETf\right)\ge\frac{n^2(n-1)}{10d\ETf}$.\\

From this, we have $\E(T)^2\ge\frac{n^2(n-1)}{10 d}\ge\frac{n^3}{20 d}$. This provides the final result:\\
$$\ETf\ge \frac1{\sqrt{20d}}n\sqrt{n}$$
\end{proof}

Lemma \ref{lemmawatson} permits to conclude the proof of Theorem \ref{ap}:
the arrangement protocol must take at least $O(n\sqrt n)$ expected interactions.


\section{Proof of Lemma \ref{Ma}}

\begin{proof}
Let $h$ be a function such as $\E(|\{{id}:M_{{id}}\ge h(n)\}|)\ge\frac\alpha2n$.

We suppose now to be in the case where $|\{{id}:M_{{id}}\ge h(n)\}|\ge\frac\alpha2n$.

For each identifier ${id}$, we associate the set $L_{{id}}\subset \Part(U)$ of
the list of identifiers the agent ${id}$ have had
during the $f(n)$ first steps.

Let $S=\sum\limits_{{id}\in U}\sum\limits_{p\in L_{{id}}}|p|$. We want to prove that
$S\ge \frac\alpha2nh(n)$.

Let $N_{id}$ be the number of occurrences of ${id}$ in all the $L_{{id}_b}$.

We have $N_{{id}}=\sum\limits_{{id}_b\in U}|\{p:{id}\in p$ $\&$ $p\in L_{{id}_b}\}|$ and $S=\sum\limits_{{id}\in U}N_{{id}}$.

We have $N_{{id}}\ge M_{{id}}$ (an agent ${id}$ appearing at least once in ${id}_b$ appears at least in
one of its lists). As we supposed $M_{{id}}\ge h(n)$ for at least $\alpha n/2$ agents, we get $S\ge\frac\alpha2 nh(n)$.

Our purpose is to get a lower bound of $Z=\sum\limits_{{id}\in U}|L_{{id}}|$.

We know that for any ${id}\in U$ and $p\in\L_{{id}}$, $|p|\le d$.
Then,\\
$$S\le\sum\limits_{{id}\in U}\sum\limits_{p\in L_{{id}}}d=d\sum\limits_{{id}\in U}|L_{{id}}|=dZ.$$

Hence, $Z\ge \frac \alpha {2d}nh(n)$.

Let $Z_i$ be the value of $Z$ after $i$ steps.

We have $Z_0=n$ and $Z_{i+1}\le Z_i+2$.
Hence, $n+2f(n)\ge Z\ge \frac \alpha{2d}nh(n)$.

From this, we get $h(n)\le \frac{2d}\alpha\left(1+\frac2nf(n)\right)$.

As $f$ is $n$-polylog, $h$ must be polylogarithmic.

If we chose $h$ maximal matching our initial postulate, we get
the expected result:
$$\E(|\{{id}:M_{{id}}\le h(n)+1\}|)\ge\left(1-\frac\alpha2\right) n$$
\end{proof}

\section{Proof of Lemma \ref{Ma2}}

\begin{proof}

For any $j$ and $k$, let $L_{j,k}$ be the random variable that is equal to 1
if identifiers $j$ and $k$ appeared in a same interaction, 0 otherwise.

We will work on the number of pairs that interacted $N=\sum\limits_{i\in I_n}L_{i,i+1}$.
We want to prove that the expectation of this variable is less than $\frac34\alpha n$.

$\E(N)=\sum\limits_{i\in I_n}\E(L_{i,i+1})=\sum\limits_{i\in I_n}\Pr(Id_{i}$ met $Id_{i+1}$ after $f(n)$ steps$)$

From Lemma \ref{Ma}, we deduce that from at least $\frac\alpha2n$ pairs,
each identifier is present on at most $g(n)$ agents, as only at most $\frac\alpha2n$ identifiers appeared on more
than $g(n)$ agents.
Hence, for these pairs,


 $\Pr(Id_{i}$ met $Id_{i+1}$ after one step$)\le{\frac{g(n)(g(n)-1)}2}\cdot{\frac2{n(n-1)}}\le\frac{g(n)^2}{n(n-1)}$.

$\Pr(Id_{i}$ met $Id_{i+1}$ after $f(n)$ steps$)\le1-\left(1-\frac{g(n)^2}{n(n-1)}\right)^{f(n)}$.


Let $\beta>1$. The function $l(x)=x^\beta$ is convex (as $l''(x)=\beta(\beta-1)x^{\beta-2}>0$).
The convexity implies that $l(x)\ge l(y)+l'(y)(x-y)$. With $x=(1-X)$ and $y=1$, we have:

$(1-X)^{\beta}\ge 1+\beta(1-X-1)=1-\beta X$.

With $X=\frac{g(n)^2}{n(n-1)}$ and $\beta=f(n)$ (as for $n$ large enough,
$f(n)$ is always greater than 1), we have the following inequality:

$\left(1-\frac{g(n)^2}{n(n-1)}\right)^{f(n)}\ge 1-\frac{f(n)g(n)^2}{n(n-1)}$.

$\Pr(Id_{i}$ met $Id_{i+1}$ after $f(n)$ steps$)\le1-\left(1-\frac{g(n)^2}{n(n-1)}\right)^{f(n)}\le\frac{f(n)g(n)^2}{n(n-1)}$.

As $f$ is $n$-polylog and $g$ is polylogarithmic, for $n$ large enough, we have
$\frac{f(n)g(n)^2}{n(n-1)}\le \frac14$.

If we sum these probabilities for the half number of pairs we considered, we get
the upper bound of $\frac14\alpha n=\frac\alpha4n$ expected pairs that appeared in a same interaction.

Even if all the pairs of identifiers of the not considered half of $I_n$ met, the upper bound
for the second half provides the result:

$$\E(N)\le\frac\alpha2n+\frac\alpha4n\le\frac34\alpha n$$
\end{proof}

\section{Proof of Theorem \ref{inclu1}}

To prove this theorem, it suffices to prove the following lemma:

\begin{lemma}
For any protocol in $CPPL$, there exists some $K\in\N$ such
as this protocol is in $SPACE(n\log^Kn)$.
\end{lemma}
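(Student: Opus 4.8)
The plan is to simulate a $CPPL$ protocol deterministically on a Turing machine and bound the space it consumes. Since any $CPPL$ protocol is in particular a Community Protocol, Theorem~\ref{cpmain} already tells us its language lies in $NSPACE(n\log n)$; the content of this lemma is the complementary \emph{deterministic} space bound $SPACE(n\log^Kn)$, which captures the fact that the computation halts in a polylogarithmic number of \emph{parallel} rounds (equivalently $O(n\log^kn)$ expected interactions). So I would not re-derive the nondeterministic simulation of Guerraoui--Ruppert, but instead exploit the time bound directly.

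First I would fix a protocol $P\in CPPL$, so there is some $k$ with expected interaction count at most $n\log^kn$. The key structural observation is that a configuration of $P$ on $n$ agents can be written down in $O(n\log n)$ space: each of the $n$ agents carries a basic state from the finite set $B$, plus its own identifier and $d$ stored identifier-slots, and by restriction~2 of the definition only the \emph{relative order} of the identifiers appearing in the configuration matters. Thus a configuration is encoded, up to order-isomorphism, by $n$ records each of size $O(\log n)$ (a state in $B$ together with $O(d)$ pointers into the sorted list of at most $dn$ relevant identifiers), for a total of $O(n\log n)$ cells. A single step $\delta(q_1,q_2)=(q_1',q_2')$ is computable in this space since $\delta$ depends only on the states in $B$ and the order pattern of the slot identifiers.

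The central point is to convert the probabilistic ``reach an output-stable configuration with probability $1$'' semantics into a deterministic space computation. Here I would argue as follows: by the Las-Vegas semantics, the input $w$ has a well-defined output $y$, and $y=True$ iff $w\in L$. An output $y$ is stable from a configuration $C$ exactly when \emph{every} configuration reachable from $C$ has output $y$. Deterministically deciding $L$ then reduces to a reachability question on the configuration graph: starting from the initial configuration, decide whether the unique output of every output-stable configuration reachable from it is $True$. Because each configuration takes $O(n\log n)$ space, this reachability can be resolved in deterministic space polynomial in $n\log n$ by the standard Savitch-style argument (a graph on configurations each of size $s$ is searchable in $O(s^2)$ deterministic space), giving $SPACE((n\log n)^2)=SPACE(n^2\log^2n)$. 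To bring this inside $\bigcup_K SPACE(n\log^Kn)$ I would instead use the $CPPL$ \emph{time} bound to truncate the search: with probability $1$ the protocol stabilizes, and the expected number of interactions is $n\log^kn$, so one only needs to examine computation paths of bounded length; reasoning about these bounded-length paths keeps the space at $O(n\log^Kn)$ for a suitable $K$ depending on $k$.

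The main obstacle I anticipate is precisely this last step: turning the polylogarithmic \emph{parallel-time} (i.e.\ expected-interaction) bound into a deterministic \emph{space} bound without the naive Savitch blowup to $n^2\log^2n$, which would only land in $SPACE(n^2\log^2n)$ rather than the claimed $\bigcup_K SPACE(n\log^Kn)$. Resolving this requires carefully using that the relevant computation paths have length $n$-polylog and that a space-bounded machine can enumerate or verify such bounded paths while reusing space across iterations; the bookkeeping to confirm the output is stable (that \emph{no} reachable configuration flips the output) within the truncated horizon is the delicate part, and is where I would expect the appendix proof to spend most of its effort.
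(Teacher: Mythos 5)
Your setup matches the paper's: encode a configuration in $O(n\log n)$ space and use the $n\log^kn$ expected-interaction bound to truncate attention to short computations. But the step you yourself flag as unresolved is exactly the step that matters, and the way you propose to resolve it (verifying, within the truncated horizon, that the output is \emph{stable}, i.e.\ that no reachable configuration flips it) aims at the wrong target: output stability is a reachability property of the whole configuration graph, so it is not bounded by any time truncation, and checking it deterministically sends you straight back to the Savitch-style search you were trying to avoid. The paper's proof never checks stability at all. It uses the randomness quantitatively: since each step is a uniformly random pair of agents, all interaction sequences of length $3f(n)$ (with $f(n)=n\log^kn$) are equally likely; by Markov's inequality the protocol has converged within $3f(n)$ interactions with probability at least $\frac23$, so strictly more than half of these sequences end in a configuration whose plain (not necessarily stable) output is the correct one. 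The machine therefore enumerates \emph{all} sequences of $3f(n)$ interactions, simulates each one while reusing space, increments a counter for each output value observed in the final configuration, and returns the majority. Correctness is immediate from the Markov bound; no stability bookkeeping is needed.

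This also fixes your space accounting, which you left vague (``a suitable $K$''). The dominant cost is not the configuration, which takes $O(n\log n)$ cells, but the description of the current sequence in the enumeration: a sequence of $3f(n)$ interactions takes $3f(n)\log n = O(n\log^{k+1}n)$ cells to write down, and this is precisely what yields $K=k+1$. Without the majority-vote idea your proposal does not go through as written; with it, the rest of what you wrote (configuration encoding, step simulation, truncation by the time bound) is sound and coincides with the paper's argument.
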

\begin{proof}
Any configuration can be described on a space $O(n\log n)$ by writing,
for each agent, its state and the list of identifiers (using identifier 1 for the
first,\ldots $n$ written in binary for the last).

Let $f(n)=n\log^kn$ for some $k$ being
 a function bounding by above the number of expected interactions to find
the output. We will prove that the result holds considering $K=k+1$.

The idea is to simulate all
the sequences of configurations of length $3f(n)$
(i.e. sequence of $3f(n)$ interactions). 

With more details: We can write a sequence by providing the order of the pairs
that interacted. Hence, we need a space of $3f(n)\log n$
to encode a sequence on a Turing Machine. Going from one sequence to the next
does not require more space.
We then just simulate the sequence of interactions on the population.
Finding, from a configuration, if it has an output or not does not require more space.

For each possible output, we use a counter initialized at 0.
Each time we computed a sequence, if the configuration does have an output,
we increase the corresponding counter.

When all the sequences have been simulated, we provide the same output
that the one that got the highest counter.

From Markov's inequality, we get that the probability to use more that $3f(n)$
interactions is less than one third. Hence, more than the half of the $3f$ sequences
must provide the right output. The output of our Machine is the same that
the protocol's.

This machine is deterministic and uses a space $O(f(n)\log n)=O(n\log^{k+1}n)$.
%
%
\end{proof}

\section{Proof of Theorem \ref{tighttheo}}

The protocol will write the 2nd and 3rd  tapes on the first $\log n$ agents,
the 4th one on the first $\log^k n$ agents. To initialize it, it will start by a leader election, 
then will compute the size of the population (getting from this $\log n$).

Let prove that the 6 items can be simulated:
\begin{enumerate}
\item It is easy, as the Leader only needs to know on which agent each
lecture head is. When the head goes left, the leader finds with an epidemic the previous identifier
(same when the head goes right with the next identifier).
\item This one can be performed in a single epidemic.
\item It corresponds to the process described in Section \ref{sec:3} performed on the
agents marked.
\item It is a similar process to finding the median identifier, only we work:
\begin{enumerate}
\item We keep two identifiers $Min$ and $Max$.
\item We take a random agent $Cand$ in $]Min,Max[$.
\item We compute the number of agents with identifier smaller than $Cand$.
\item Either it is the right one and it is over, either we have an update of $Min$ or $Max$.
In the second case, we go back to step (b) with the new interval.
\end{enumerate}
We can see that this process will use a logarithmic number of loops in expectation.
\item We perform an epidemic to mark all agents of identifier smaller of
the pointing head's.
\item Again, an epidemic is enough.
\item The leader identifiers the $(x+1)$th identifier in the population. It then spreads an epidemic
to look for an identifier smaller or equal to this one, as in the median candidate process. As in the
median process, each identifier has the same probability to be selected.

After than, the leader needs to count how many identifiers are smaller to this one (his own excluded).
The given number will be chosen in the right interval according to a homogeneous distribution.
\end{enumerate}

\end{document}